\tikzset{
  initial text=,
  every path/.style={->,-stealth, every loop/.style={-stealth}},
  every initial by arrow/.style={-stealth, initial text={},
  every loop/.style={red,-stealth}},
}
\newcommand{\sharpsym}{\ensuremath{\mathop{\protect\raisebox{-0.5pt}{\protect\scalebox{1.15}{\protect\UseVerb{sharp}}}}}}
\newcommand{\errorsym}{\text{\ding{55}}}
\newcommand{\markl}{\hspace{-.5pt}\raisebox{.4pt}{\text{\scalebox{.85}{\ding{220}}}}\hspace{-.5pt}}
\newcommand{\markr}{\hspace{-.5pt}\raisebox{.4pt}{\text{\reflectbox{\scalebox{.85}{\ding{220}}}}}\hspace{-.5pt}}
\newcommand{\ssharpsym}{{\scalebox{0.7}{\sharpsym}}}
\newcommand{\copyconf}{C}
\newcommand{\newconf}{N}
\newcommand{\AP}[0]{\mathrm{AP}}
\newcommand{\psiinc}{\psi_{\text{inc}}}
\newcommand{\myquot}[1]{``#1''}
\newcommand{\nats}{\mathbb{N}}
\newcommand{\natsplus}{\nats_+}
\newcommand{\size}[1]{|#1|}
\renewcommand{\epsilon}{\varepsilon}
\renewcommand{\phi}{\varphi}
\renewcommand{\theta}[0]{\vartheta}
\newcommand{\set}[1]{\{#1\}}
\newcommand{\pow}[1]{2^{#1}}
\newcommand{\aut}{\mathcal{A}}
\newcommand{\autc}{\mathcal{T}}
\newcommand{\tm}{\mathcal M}
\newcommand{\col}{\Omega}
\newcommand{\delaygame}[1]{\Gamma\!_{f}(#1)}
\newcommand{\delaygamep}[1]{\Gamma\!_{f'}(#1)}
\newcommand{\SigmaI}{\Sigma_I}
\newcommand{\SigmaO}{\Sigma_O}
\newcommand{\strat}{\tau}
\newcommand{\stratO}{\tau_O}
\newcommand{\stratI}{\tau_I}
\newcommand{\p}{P}
\newcommand{\arena}{\mathcal{A}}
\newcommand{\exptime}{\textsc{ExpTime}}
\newcommand{\bigo}{\mathcal{O}}
\newcommand{\pspace}{{\textsc{PSpace}}}
\newcommand{\twoexp}{{\textsc{2ExpTime}}}
\newcommand{\threeexp}{{\textsc{3ExpTime}}}
\newcommand{\atwoexpspace}{\textsc{A2ExpSpace}}
\newcommand{\autp}{\mathcal{P}}
\newcommand{\resolve}{r}
\newcommand{\game}{\mathcal{G}}
\newcommand{\curlyR}{\mathfrak{R}}
\newcommand{\dom}{\mathrm{dom}}
\newcommand{\block}[1]{\overline{#1}}
\newcommand{\qacc}{q_{A}}
\newcommand{\qrej}{q_{R}}
\newcommand{\cceq}{\mathop{::=}}
\newcommand{\update}{\mathrm{upd}}
\newcommand{\F}{{\mathbf{F\,}}}
\newcommand{\G}{{\mathbf{G\,}}}
\newcommand{\Fp}{{\mathbf{F_{\!P}\,}}}
\newcommand{\U}{{\mathbf{\,U\,}}}
\newcommand{\X}{{\mathbf{X\,}}}
\newcommand{\R}{{\mathbf{\,R\,}}}
\newcommand{\rel}[0]{\mathrm{rel}}
\newcommand{\ltl}{\ensuremath{\textsc{LTL}}}
\newcommand{\pltl}{\mathrm{PLTL}}
\newcommand{\prompt}{\ensuremath{\textsc{Prompt}\text{-}}\ltl}
\newcommand{\pldl}{\textsc{PLDL}}
\newcommand{\ldl}{\textsc{LDL}}
\title{Prompt Delay\thanks{The first author was supported by an IMPRS-CS PhD Scholarship, the second by the project \myquot{TriCS} (ZI~1516/1-1) of the German Research Foundation (DFG). }}
\author{Felix Klein and Martin Zimmermann}
\institute{Reactive Systems Group, Saarland University, Germany\\
 \email{\{klein, zimmermann\}@react.uni-saarland.de}}
\begin{document}

\maketitle

\begin{abstract}
  Delay games are two-player games of infinite duration in which one
player may delay her moves to obtain a lookahead on her opponent's
moves. Recently, such games with quantitative winning conditions in
weak MSO with the unbounding quantifier were studied, but their properties turned out
to be unsatisfactory. In particular, unbounded lookahead is in general
necessary.

\sloppy

Here, we study delay games with winning conditions given by {\prompt},
Linear Temporal Logic equipped with a parameterized eventually
operator whose scope is bounded. Our main result shows that solving
{\prompt} delay games is complete for triply-exponential
time. Furthermore, we give tight triply-exponential bounds on the
necessary lookahead and on the scope of the parameterized eventually
operator. Thus, we identify {\prompt} as the first known class of
well-behaved quantitative winning conditions for delay games.

\fussy

Finally, we show that applying our techniques to delay games with 
$\omega$-regular winning conditions answers open questions in the 
cases where the winning conditions are given by non-deterministic, 
universal, or alternating automata.

\end{abstract}

\section{Introduction}
\label{sec_intro}
The synthesis of reactive systems concerns the automatic construction of
an implementation satisfying a given specification against every behavior of its possibly antagonistic environment. A prominent
specification language is Linear Temporal Logic ({\ltl}), describing
the temporal behavior of an
implementation~\cite{Pnueli77}. The {\ltl} synthesis problem
has been intensively studied since the seminal work of Pnueli and Rosner~\cite{DBLP:conf/popl/PnueliR89,DBLP:conf/icalp/PnueliR89}, theoretical foundations have been established~\cite{DBLP:journals/tocl/AlurT04,DBLP:conf/focs/KupfermanV05}, and
several tools have been developed~\cite{DBLP:journals/fmsd/Ehlers12,DBLP:journals/fmsd/FiliotJR11,DBLP:journals/sttt/FinkbeinerS13}.

However, {\ltl} is not able to express quantitative properties. As an
example, consider the classical request-response
condition~\cite{DBLP:journals/ita/HornTW015}, where every
request~$ q $ has to be answered eventually by some response~$ r $. This property is expressible in {\ltl} via the
formula~$ \G (q \rightarrow \F r) $, but the property cannot guarantee
any bound on the waiting times between a request and its earliest response.
To specify such a behaviour, parameterized logics have been
introduced~\cite{DBLP:journals/tocl/AlurETP01,FaymonvilleZimmermann14,KupfermanPitermanVardi09,Zimmermann15c}, which extend $\ltl$ by quantitative operators. 

The simplest of these logics is 
{\prompt}~\cite{KupfermanPitermanVardi09}, which extends
{\ltl} by the prompt eventually operator~$ \Fp $\!.\!\! The scope of this operator is bounded by some arbitrary but fixed number~$k$. With this extension, we can express the aforementioned property by the {\prompt}
formula~$ \G (q \rightarrow \Fp r) $, expressing that every request is answered within $k$ steps.  To show that the {\prompt} synthesis problem is as hard as the {\ltl} synthesis problem, i.e., $\twoexp$-complete, Kupferman et al.\ introduced the alternating-color technique to reduce the former problem to the latter~\cite{KupfermanPitermanVardi09}. Additionally, similar
reductions have been proven to exist in other settings too,
where {\prompt} can be reduced to {\ltl} using the alternating-color
technique, e.g., for (assume-guarantee) model-checking~\cite{KupfermanPitermanVardi09}. Finally, the technique is also applicable to more expressive extensions of $\ltl$, e.g., parametric {\ltl}~\cite{Zimmermann13}, parametric $\ldl$~\cite{FaymonvilleZimmermann14}, and their variants with costs~\cite{Zimmermann15c}.

Nevertheless, all these considerations assume that the specified
implementation immediately reacts to inputs of the environment. However, this
assumption might be too restrictive, e.g., in a buffered
network, where the implementation may delay its outputs by several time
steps. Delay games have been introduced by Hosch and
Landweber~\cite{HoschLandweber72} to overcome this restriction.  In the setting of
infinite games, the synthesis problem is viewed as a game
between two players, the input player ``Player~$ I $'', representing the environment, and the output
player~``Player~$ O $'', representing the implementation. The goal of Player~$ O $ is to satisfy the
specification, while Player~$ I $ tries to violate it. Usually, the players move in strict alternation. On the contrary, in a delay
game, Player~$O$ can delay her moves to obtain a
lookahead on her opponent's moves. This way, she gains additional
information on her opponent's strategy, which she can use to achieve her
goal. Hence, many specifications are realizable, when allowing
lookahead, which are unrealizable otherwise. 

For delay games with $\omega$-regular winning conditions (given by deterministic parity automata) exponential lookahead is always sufficient and in general necessary, and determining the winner is $\exptime$-complete~\cite{KleinZimmermann16a}. As $\ltl$ formulas can be translated into equivalent deterministic parity automata of doubly-exponential size, these results imply a triply-exponential upper bound on the necessary lookahead in delay games with $\ltl$ winning condition and yield an algorithm solving such games with triply-exponential running time. However, no matching lower bounds are known.

Recently, based on the techniques developed for the $\omega$-regular case, the investigation of delay games with quantitative winning conditions
was initiated by studying games with winning conditions specified in weak monadic second order logic with the unbounding quantifier (WMSO$+$U)~\cite{DBLP:conf/csl/Bojanczyk04}. This logic
extends the weak variant of monadic second order logic
(WMSO), where only quantification over
finite sets is allowed, with an additional unbounding
quantifier that allows to express (un)boundedness properties. The resulting logic subsumes all parameterized logics mentioned above. The winner of a WMSO$+$U delay game with respect to bounded lookahead is effectively computable~\cite{Zimmermann16}. However, in general, Player~$O$ needs unbounded lookahead to win such games and the decidability of such games with respect to arbitrary lookahead remains an open problem. In the former aspect, delay games with WMSO$+$U winning conditions behave worse than those with $\omega$-regular ones.

\medskip

\noindent\textbf{Our Contribution.} The results on WMSO$+$U delay games show the relevance of exploring more restricted classes of quantitative winning conditions which are better-behaved. In particular, bounded lookahead should always suffice and the winner should be effectively computable.
To this end, we investigate delay games with {\prompt} winning
conditions. Formally, we consider the following synthesis problem:
given some {\prompt} formula~$ \varphi $, does there exist some
lookahead and some bound~$ k $ such that Player~$ O $ has a
strategy producing only
outcomes that satisfy $ \varphi $ with respect to the bound~$k$ (and, if yes, compute such a strategy)?

We present the first results for
delay games with {\prompt} winning conditions. First, we show that the synthesis
problem is in $\threeexp$ by tailoring the alternating-color technique to delay games and integrating it into the algorithm developed for the $\omega$-regular case. In the end, we obtain a reduction from delay games with {\prompt} winning conditions to delay-free parity games of triply-exponential size. 

Second, from this construction, we derive 
triply-exponential upper bounds on the necessary lookahead 
for Player~$ O $, i.e., bounded lookahead always suffices, as well as a triply-exponential upper bound on the necessary scope of the prompt eventually operator. Thus, we obtain  the same upper bounds as for $\ltl$.

Third, we complement all three upper bounds by matching lower bounds, e.g., the problem is $\threeexp$-complete and there are triply-exponential lower bounds on the necessary lookahead and on the scope of the prompt eventually operator.  The former two lower bounds already hold for the special case of $\ltl$ delay games. Thereby, we settle the case of delay games with $\ltl$ winning conditions as well as the case of delay games with $\prompt$ winning conditions and show that they are of equal complexity and that the same bounds on the necessary lookahead hold. Thus, we prove that delay games with $\prompt$ winning conditions are not harder than those with $\ltl$ winning conditions. The complexity of solving {\ltl} games increases exponentially when adding lookahead, which is in line with the results in the $\omega$-regular case~\cite{KleinZimmermann16a}, where one also observes an exponential blowup.

Fourth, our proofs are all applicable to the stronger extensions of $\ltl$ like parametric {\ltl}~\cite{Zimmermann13}, parametric $\ldl$~\cite{FaymonvilleZimmermann14}, and their variants with costs~\cite{Zimmermann15c}, as the alternating-color technique is applicable to them as well and as their formulas can be compiled into equivalent exponential Büchi automata. 

Fifth, we show that our lower bounds also answer open questions in the $\omega$-regular case mentioned in~\cite{KleinZimmermann16a}, e.g., on the influence of the branching mode of the specification automaton on the complexity. Recall that the tight exponential bounds on the complexity and the necessary lookahead for $\omega$-regular delay games were shown for winning conditions given by \emph{deterministic} automata. Our lower bounds proven here can be adapted to show that both these bounds are doubly-exponential for non-deterministic and universal automata and triply-exponential for alternating automata. Hence, the lower bounds match the trivial upper bounds obtained by determinizing the automata and applying the results from~\cite{KleinZimmermann16a}. Thus, we complete the picture in the $\omega$-regular case with regard to the branching mode of the specification automaton. 

\medskip

\sloppy

\noindent\textbf{Related Work.} Delay games with $ \omega $-regular
winning conditions have been introduced
by Hosch and Landweber, who proved that the winner w.r.t.\ bounded lookahead can be determined effectively~\cite{HoschLandweber72}. Later, they were revisited by Holtmann et al.\ who showed that bounded lookahead is always sufficient and who gave a streamlined algorithm with doubly-exponential running time and a doubly-exponential upper bound on the necessary lookahead~\cite{HoltmannKaiserThomas12}. Recently, the tight exponential bounds on the running time and on the lookahead mentioned above were proven~\cite{KleinZimmermann16a}.
Delay games
with context-free winning conditions turned out to be
undecidable for very small fragments~\cite{FridmanLoedingZimmermann11}. The results of going beyond the $\omega$-regular case by considering WMSO$+$U winning conditions are mentioned above. Furthermore, all delay games with Borel winning conditions
are determined~\cite{KleinZimmermann15}. Finally,
from a more theoretical point of view, Holtman et al.\ also
showed that delay games are a suitable representation of
uniformization problems for relations by continuous functions~\cite{HoltmannKaiserThomas12}.\medskip

\fussy

\section{Preliminaries}
\label{sec_prelim}
The set of non-negative (positive) integers is denoted by $ \nats $ ($\natsplus$). An alphabet~$ \Sigma $ is a non-empty finite set of letters, $ \Sigma^{*}
$ is the set of finite words over $ \Sigma $, $ \Sigma^{i} $ the set of
words of length~$ i $, and $ \Sigma^{\omega} $ the set of infinite words. The
empty word is denoted by $ \epsilon $ and the length of a finite word~$ w $
by~$ \size{w} $. For $ w \in \Sigma^{*} \cup \Sigma^{\omega} $ we
write $ w(i) $ for the $ i $-th letter of $ w $.	Given two infinite words $\alpha \in \SigmaI^\omega$ and $\beta  \in \SigmaO^\omega$ we write ${ \alpha \choose \beta}$ for the word ${\alpha(0) \choose \beta(0) } {\alpha(1) \choose \beta(1) }  {\alpha(2) \choose \beta(2) } \cdots  \in (\SigmaI \times \SigmaO)^\omega$. Analogously, we write ${x \choose y}$ for finite words $x$ and $y$, provided they are of equal length.

\subsection{Parity Games}
\label{subsec_paritygames}
An arena~$\arena$ is a tuple~$(V, V_I, V_O, E)$, where $(V,E)$ is a finite directed graph without terminal vertices and $\{V_I, V_O\}$ is a partition of $V$ into the positions of Player~$I$ and Player~$O$. A parity game~$\game = (\arena, \col)$ consists of an arena~$\arena$ with vertex set~$V$ and of a priority function~$\col \colon V \rightarrow \nats$. A play~$\rho$ is an infinite sequence~$v_0 v_1 v_2 \cdots$ of vertices such that $(v_i, v_{i+1}) \in E$ for all $i$. A strategy for Player~$O$ is a map~$\sigma \colon V^*V_O \rightarrow V$ such that $(v_i, \sigma(v_0 \cdots v_i)) \in E$ for all $v_i \in V_O$. The strategy~$\sigma$ is positional, if $\sigma(wv) = \sigma(v)$ for all $wv \in V^*V_O$. Hence, we denote it as mapping from $V_O$ to V. A play~$v_0 v_1 v_2 \cdots $ is consistent with $\sigma$, if $v_{i+1} = \sigma(v_0 \cdots v_i)$ for every $i$ with $v_i \in V_O$. The strategy~$\sigma$ is winning from a vertex~$v \in V$, if every play~$v_0 v_1 v_2 \cdots $ with $v_0 = v$ that is consistent with $\sigma$ satisfies the parity condition, i.e., the maximal priority appearing infinitely often in $\col(v_0) \col(v_1) \col(v_2) \cdots $ is even. The definition of (winning) strategies for Player~$I$ is dual. Parity games are positionally determined~\cite{EmersonJutla91,Mostowski91}, i.e., from every vertex one of the players has a positional winning strategy.

\subsection{Delay Games}
\label{subsec_delaygames}
A delay function is a mapping $ f \colon \nats \rightarrow\natsplus$, which is said to be constant, if $ f(i) = 1 $ for
every $ i > 0 $. Given a winning condition~$ L \subseteq
  \left( \SigmaI \times \SigmaO \right)^{\omega} $ and a delay
function~$ f $, the game $ \delaygame{L} $ is played by two players,
 Player~$ I $ and Player~$ O $, in rounds $ i = 0,1,2,\ldots $ as follows: in round~$ i $, Player~$ I $
picks a word $ u_{i} \in \SigmaI^{f(i)} $, then Player~$ O $ picks one
letter $ v_{i} \in \SigmaO $. We refer to the sequence $
(u_{0},v_{0}), (u_{1},v_{1}), (u_{2},v_{2}), \ldots$ as a play of $
\delaygame{L} $. Player~$ O $ wins the play if the outcome~$ {
  u_{0}u_{1}u_{2} \cdots \choose v_{0}v_{1}v_{2} \cdots }$ is in $ L $, otherwise Player~$ I $ wins.

Given a delay function $ f $, a strategy for Player~$ I $ is a mapping
$ \stratI \colon \SigmaO^{*} \rightarrow\SigmaI^{*} $ where $
\size{\stratI(w)} = f(\size{w}) $, and a strategy for Player~$ O $ is
a mapping $ \stratO \colon \SigmaI^{*} \rightarrow\SigmaO $. Consider a
play $ (u_{0},v_{0}), (u_{1},v_{1}), (u_{2},v_{2}), \ldots $ of $
\delaygame{L} $. Such a play is consistent with $ \stratI $, if $
u_{i} = \stratI(v_{0} \cdots v_{i-1}) $ for every $ i \in \nats $. It
is consistent with $ \stratO $, if $ v_{i} = \stratO(u_{0} \cdots
u_{i}) $ for every $ i \in \nats $. A strategy~$ \strat $ for Player~$\p\in \set{I,O}$ is winning, if every play that is consistent with $\strat$ is winning for Player~$\p$. We say that a player wins $ \delaygame{L} $, if she has a winning strategy.

\subsection{Prompt LTL}
\label{subsec_prompt}
Fix a set $\AP$ of atomic propositions. $ \prompt $ formulas are given by
%
\begin{equation*}\phi \cceq p \mid \neg p \mid \phi \wedge \phi \mid \phi \vee
\phi \mid \X \phi \mid \F \phi \mid \G \phi \mid \phi \U \phi \mid \phi \R \phi \mid \Fp \phi
  ,\end{equation*}
where $p \in \AP$. We use $\phi \rightarrow \psi$ as shorthand for $\neg \phi \vee \psi$,
where we require $\phi$ to be a $\Fp$-free formula (for which the negation can be pushed to the atomic propositions using the dualities of the classical temporal operators). The size~$\size{\phi}$ of $\phi$ is the number of subformulas of $\phi$.

The satisfaction 
relation is defined for an $\omega$-word~$w \in \left( \pow{\AP} \right)^{ \omega }$, a
position~$ i $ of $w$, a bound~$k$ for the prompt eventually operators, and a~$\prompt$ formula. The definition is standard for the classical operators and defined as follows for the prompt eventually:
\begin{eqnarray*}
&(w,i,k)\models\Fp\phi \text{ if, and only if, there exists a } j \\
& \text{ with } 0\le j \le k \text{ such that } (w,i+j,k)\models\phi .
\end{eqnarray*}
For the sake of brevity, we write $(w,k) \models \phi$ instead of
$(w,0,k) \models \phi$. Note that $\phi$ is an $\ltl$ formula~\cite{Pnueli77}, if it does not contain the prompt eventually operator. Then, we write $w \models \phi$.

\subsection{The Alternating-color Technique}
\label{subsec_altcolor}
Let $p\notin \AP$ be a fixed fresh proposition. An
$\omega$-word~$w'\in\left(2^{\AP\cup\{p\}}\right)^{\omega}$ is a $p$-coloring of
$w\in\left(2^{\AP}\right)^{\omega}$ if $w'(i)\cap \AP=w(i)$ for all $i$.

A position~$i$ of a word in $\left(2^{\AP\cup\set{p}}\right)^{\omega}$ is a change point, if $i=0$ or if the truth value of $p$ at positions~${i-1}$
and $i$ differs. A $p$-block is an infix~$w'(i) \cdots w'({i+j})$ of $w'$ such that $i$ and $i+j+1$ are adjacent change points. Let $k \ge 1$: we say
that $w'$ is $k$-spaced, if $w$ has infinitely many changepoints and each
$p$-block has length at least $k$; we say that $w'$ is $k$-bounded, if each
$p$-block has length at most~$k$ (which implies that $w'$ has infinitely many change points).

Given a $\prompt$ formula~$\phi$,
let $\rel'(\phi)$ denote the formula obtained by inductively replacing
every subformula~$\Fp\psi$ by
\begin{equation*}
(p\rightarrow (p\U(\neg p\U\rel'(\psi))))\wedge(\neg p\rightarrow (\neg p\U(
p\U\rel'(\psi))))
\end{equation*}
and let $\rel(\phi) = \rel'(\phi) \wedge \G\F p \wedge \G\F \neg p$, i.e., we additionally require infinitely many change points.
Intuitively, instead of requiring $\psi$ to be satisfied within a bounded number of steps, $\rel(\phi)$ requires it to be satisfied within at most one change point. The relativization~$\rel(\phi)$ is an $\ltl$ formula of size~$\bigo(\size{\phi})$. Kupferman et al.\ showed that $\phi$ and $\rel(\phi)$ are \myquot{equivalent} on $\omega$-words
which are bounded and spaced.

\begin{lemma}[\cite{KupfermanPitermanVardi09}]
\label{lemma_alternatingcolor}
Let $\phi$ be a $\prompt$ formula and $k \in \nats$.
\begin{enumerate}
\item \label{lemma_alternatingcolor_prompttoltl}
If $(w,k)\models \phi$, then $w' \models \rel(\phi)$ for
every $k$-spaced $p$-coloring~$w'$ of $w$.

\item \label{lemma_alternatingcolor_ltltoprompt}
 If $w'$ is a $k$-bounded $p$-coloring of $w$ such that
$w' \models \rel(\phi)$, then $(w,2k)\models\phi$.
\end{enumerate}
\end{lemma}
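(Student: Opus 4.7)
The plan is to proceed by structural induction on $\phi$, strengthening each claim into a statement that ranges over all positions: for part~1, that $(w,i,k) \models \psi$ implies $(w',i) \models \rel'(\psi)$ for every subformula~$\psi$ and every position~$i$; for part~2, dually, that $(w',i) \models \rel'(\psi)$ implies $(w,i,2k) \models \psi$. The conjuncts $\G\F p$ and $\G\F\neg p$ of $\rel(\phi)$ come for free from $k$-spacing in part~1 and are supplied as a hypothesis in part~2. The base cases, the Boolean connectives, and the operators $\X$, $\F$, $\G$, $\U$, $\R$ require only that $\rel'$ act homomorphically and that their semantics ignore the bound; these steps are routine.

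The only interesting case is $\Fp\psi$. For part~1, let $(w,i,k) \models \Fp\psi$, pick $j \in [0,k]$ with $(w,i+j,k) \models \psi$, and apply induction to obtain $(w',i+j) \models \rel'(\psi)$. Assume $p \in w'(i)$; the opposite polarity is symmetric. Let $c_1 \le i < c_2 < c_3$ be three consecutive change points bracketing~$i$. The $k$-spaced hypothesis gives $c_2 - c_1 \ge k$ and $c_3 - c_2 \ge k$, whence $c_3 \ge c_2 + k \ge i + 1 + k > i + j$, so $i+j$ lies in $[c_1,c_2-1]$ or in $[c_2,c_3-1]$. In the first sub-case, $m = n = i+j$ witnesses $p \U (\neg p \U \rel'(\psi))$ at position~$i$; in the second, $m = c_2$ and $n = i+j$ do the job.

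For part~2, fix~$i$ with $(w',i) \models \rel'(\Fp\psi)$ and select the conjunct of $\rel'(\Fp\psi)$ whose antecedent matches the polarity of $p$ at~$i$. Unfolding the two nested untils yields $m$ and $n$ with $i \le m \le n$, $p$ (resp.~$\neg p$) constant on $[i,m-1]$, the opposite value constant on $[m,n-1]$, and $(w',n) \models \rel'(\psi)$. Each of these two intervals lies entirely inside a single $p$-block (the one containing~$i$, respectively its right neighbour), so $k$-boundedness gives $m - i \le k$ and $n - m \le k$, hence $n - i \le 2k$. Induction yields $(w,n,2k) \models \psi$, and $j = n-i$ witnesses $(w,i,2k) \models \Fp\psi$.

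The main obstacle is the combinatorial accounting of change-point spacings: one has to verify that under $k$-spacing no second full block can fit inside a window of length~$k$, and that under $k$-boundedness, traversing at most one change point costs at most $2k$ steps. Once this arithmetic is pinned down, everything else is a mechanical simultaneous induction over the structure of $\phi$.
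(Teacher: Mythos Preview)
The paper does not supply its own proof of this lemma; it is quoted verbatim from \cite{KupfermanPitermanVardi09} and used as a black box. Your argument is correct and is essentially the standard proof from that reference: strengthen both directions to arbitrary positions, proceed by structural induction on~$\phi$, treat the classical operators homomorphically (they ignore the bound parameter and $\rel'$ commutes with them), and carry out the change-point arithmetic for the single interesting case~$\Fp\psi$. The bounds you derive---at most one full block fits in a window of length~$k$ under $k$-spacing, and crossing at most one change point costs at most $2k$ steps under $k$-boundedness---are exactly the content of the original proof. Nothing is missing.
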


\section{Delay Games with Prompt-LTL Winning Conditions}
\label{sec_promptdelaygames}
In this section, we study delay games with $\prompt$ winning conditions. Player~$O$'s goal in such games is to satisfy the winning condition~$\phi$ with respect to a bound~$k$ which is uniform among all plays consistent with the strategy. We show that such games are reducible to delay games with $\ltl$ winning conditions by tailoring the alternating-color technique to delay games and integrating it into the algorithm for solving $\omega$-regular delay games~\cite{KleinZimmermann16a}.

Throughout this section, we fix a partition~$\AP = I \cup O$ of the set of atomic propositions into input propositions~$I$ under Player~$I's$ control and output propositions~$O$ under Player~$O$'s control. Let $\SigmaI= \pow{I}$ and $\SigmaO= \pow{O}$ and let ${\alpha \choose \beta} \in (\SigmaI \times \SigmaO)^\omega$ be given. We write $({\alpha \choose \beta},k) \models \phi$ for 
\begin{equation*}
  ((\alpha(0) \cup \beta(0))\,(\alpha(1) \cup \beta(1))\,(\alpha(2) \cup \beta(2))\,\cdots, k) \models \phi.
\end{equation*}
Given $\varphi$ and a bound~$k$, we define 
$L(\phi, k) = \set{ 
{ \alpha \choose \beta }
 \in  (\SigmaI \times \SigmaO )^\omega 
\mid ({\alpha \choose \beta},k) \models \phi }$.
If $\phi$ is an $\ltl$ formula, then this language is independent of $k$ and will be denoted by $L(\phi)$.

A $\prompt$ delay game~$\delaygame{\phi}$ consists of a delay function~$f$ and a $\prompt$ formula~$\phi$. We say that Player~$\p \in \set{I,O}$ wins $\delaygame{\phi}$ for the bound $k$, if she wins $\delaygame{L(\phi, k)}$. If we are not interested in the bound itself, but only in the existence of some bound, then we also say that Player~$O$ wins $\delaygame{\phi}$, if there is some $k$ such that she wins $\delaygame{\phi}$ for $k$.  If $\phi$ is an $\ltl$ formula, then we call $\delaygame{\phi}$ an $\ltl$ delay game. The winning condition~$L(\phi)$ of such a game is $\omega$-regular and independent of $k$.

In this section, we solve the following decision problem: given a $\prompt$ formula~$\phi$, does Player~$O$ win $\delaygame{\varphi}$ for some delay function~$f$? Furthermore, we obtain upper bounds on the necessary lookahead and the necessary bound~$k$, which are complemented by matching lower bounds in the next section.

With all definitions at hand, we state our main theorem of this section. 

\begin{theorem}
\label{thm_promptdelaycomplexity}
The following problem is in $\threeexp$: given a $\prompt$ formula~$\phi$, does Player~$O$ win $\delaygame{\phi}$ for some delay function~$f$? 
\end{theorem}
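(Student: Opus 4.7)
The plan is to reduce the $\prompt$ delay game $\delaygame{\phi}$ to an $\ltl$ delay game via the alternating-color technique, and then invoke the algorithm for $\omega$-regular delay games from~\cite{KleinZimmermann16a}. Concretely, I would augment the set of output propositions with a fresh proposition~$p \notin \AP$, so that in each round Player~$O$ picks both her original output letter and a value for~$p$. The resulting $\ltl$ delay game is $\delaygame{\rel(\phi)}$ over $\SigmaI$ and $\SigmaO' \coloneq \pow{O \cup \set{p}}$, whose winning condition is an $\ltl$ formula of size~$\bigo(\size{\phi})$.

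The key intermediate claim is the equivalence: Player~$O$ wins $\delaygame{\phi}$ for some bound~$k$ and some delay function if and only if she wins $\delaygame{\rel(\phi)}$ for some delay function. For the forward direction, given a strategy~$\sigma$ winning $\delaygame{\phi}$ for bound~$k$, Player~$O$ plays $\sigma$ on the original component and colors $p$ with a fixed periodic $k$-spaced pattern (which is independent of the input and thus requires no extra lookahead); by Lemma~\ref{lemma_alternatingcolor}(\ref{lemma_alternatingcolor_prompttoltl}), every consistent outcome satisfies $\rel(\phi)$. For the backward direction, the $\ltl$ delay game reduces, via~\cite{KleinZimmermann16a}, to a delay-free parity game on which Player~$O$ has a positional winning strategy, which lifts to a finite-memory winning strategy~$\sigma'$ in the delay game. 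Since $\rel(\phi)$ contains the conjunct $\G\F p \wedge \G\F \neg p$ and $\sigma'$ has only finitely many memory states, the $p$-coloring along any consistent play must flip within any sufficiently long interval: otherwise, a configuration consisting of a memory state and a vertex of the underlying finite arena would repeat while producing the same $p$-value, contradicting $\G\F p$ or $\G\F \neg p$. This yields a uniform bound~$k$, and Lemma~\ref{lemma_alternatingcolor}(\ref{lemma_alternatingcolor_ltltoprompt}) then gives a strategy winning $\delaygame{\phi}$ for bound~$2k$ by dropping the coloring component.

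The complexity follows by a straightforward composition: $\rel(\phi)$ can be compiled into a deterministic parity automaton of doubly-exponential size in $\size{\phi}$, and the exponential-time algorithm of~\cite{KleinZimmermann16a} on this automaton yields an overall triply-exponential running time. The main technical obstacle is the backward direction, where I must carefully argue that the finite-memory strategy extracted from the triply-exponential parity game produces \emph{uniformly} bounded colorings across all consistent plays and derive an explicit value of~$k$; this requires looking inside the reduction of~\cite{KleinZimmermann16a} to relate its arena and memory structure to block lengths. A secondary subtlety is that the delay function need not change during the reduction, since the periodic coloring consumes no lookahead in the forward direction, and the backward direction reuses the original delay function verbatim.
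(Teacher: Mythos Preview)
Your proposal is correct and follows essentially the same approach as the paper: add the fresh proposition~$p$ to Player~$O$'s alphabet, prove that Player~$O$ wins $\delaygame{\phi}$ for some~$f$ if and only if she wins $\delaygame{\rel(\phi)}$ for some~$f$, and then run the $\exptime$ algorithm of~\cite{KleinZimmermann16a} on a doubly-exponential deterministic parity automaton for~$\rel(\phi)$. Your backward direction---model the constant-lookahead $\ltl$ delay game as a finite parity game with the buffer stored explicitly, take a positional winning strategy, and pump on repeated configurations to force a uniform bound on $p$-block lengths---is exactly what the paper describes as the ``naive'' route; the paper itself notes that this yields only a quadruply-exponential~$k$ and then introduces a refined construction (augmenting the tracking automaton to record change points, so that boundedness can be argued already at the level of the abstract game~$\game$) to obtain the triply-exponential bound on~$k$ stated in Corollary~\ref{cor_promptupperbounds}. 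That refinement is irrelevant for the $\threeexp$ membership claim, so your argument suffices for the theorem as stated; one small inaccuracy is your remark that the backward direction ``reuses the original delay function verbatim''---in fact one first passes, via~\cite{KleinZimmermann16a}, to a constant delay function before building the finite arena, but since the theorem only asks for \emph{some}~$f$, this is harmless.
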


\begin{proof}	
We reduce $\prompt$ to $\ltl$ delay games using the alternating-color technique. To this end, we add the proposition~$p$, which induces the coloring, to $O$, i.e., in a game with winning condition~$\rel(\varphi)$ Player~$O$'s alphabet is $\pow{O \cup \set{p}}$. In Lemma~\ref{lemma_splitgamecorrectness}, we prove that Player~$O$ wins $\delaygame{\phi}$ for some delay function~$f$ if, and only if, Player~$O$ wins $\delaygame{\rel(\phi)}$ for some delay function~$f$. This equivalence proves our claim: Determining whether Player~$O$ wins a delay game (for some $f$) whose winning condition is given by a deterministic parity automaton is $\exptime$-complete~\cite{KleinZimmermann16a}. We obtain an algorithm with triply-exponential running time by constructing a doubly-exponential deterministic parity automaton recognizing $L(\rel(\phi))$ and then running the exponential-time algorithm on it. \qed
\end{proof}

Thus, it remains to prove the equivalence between the delay games with winning conditions $\phi$ and $\rel(\phi)$. The harder implication is the one from the $\ltl$ delay game to the $\prompt$ delay game. There is a straightforward extension of the solution to the delay-free case. There, one proves that a finite-state strategy for the {\ltl} game with winning condition~$\rel(\phi)$ (which always exists, if Player~$O$ wins the game) only produces $k$-bounded outcomes, for some $k$ that only depends on the size of the strategy. Hence, by projecting away the additional proposition~$p$ inducing the coloring, we obtain a winning strategy for the {\prompt} game with winning condition~$\phi$ with bound~$2k$ by applying Lemma~\ref{lemma_alternatingcolor}.\ref{lemma_alternatingcolor_ltltoprompt}.

\sloppy

Now, consider the case with lookahead: if Player~$O$ wins $\delaygame{\rel(\phi)}$, which has an $\omega$-regular winning condition, then also $\delaygamep{\rel(\phi)}$ for some triply-exponential constant $f'$~\cite{KleinZimmermann16a}. We can model $\delaygamep{\rel(\phi)}$ as a delay-free parity game of \mbox{quadruply-exponential} size by storing the lookahead explicitly in the state space of the parity game. A positional winning strategy in this parity game only produces $k$-bounded plays, where $k$ is the size of the delay-free game, as the color has to change infinitely often. Hence, such a strategy can be turned into a winning strategy for Player~$O$ in $\delaygamep{\phi}$ with respect to some quadruply-exponential bound~$k$. However, this naive approach is not optimal: we present a more involved construction that achieves a triply-exponential bound~$k$. The problem with the aforementioned approach is that the decision to produce a change point depends on the complete lookahead. We show how to base this decision on an exponentially smaller abstraction of the lookaheads, which yields an asymptotically optimal bound~$k$. To this end, we extend the construction underlying the algorithm for $\omega$-regular delay games~\cite{KleinZimmermann16a} by integrating the alternating-color technique.

\fussy

Intuitively, we assign to each $w \in \SigmaI^*$, i.e., to each potential additional information Player~$O$ has access to due to the lookahead, the behavior $w$ induces in a deterministic automaton~$\aut$ accepting $L(\rel(\phi))$, namely the state changes induced by $w$ and the most important color on these runs. We construct $\aut$ such that it keeps track of change points in its state space, which implies that they are part of the behavior of $w$. Then, we construct a parity game in which Player~$I$ picks such behaviors instead of concrete words over $\SigmaI$ and Player~$O$ constructs a run on suitable representatives. The resulting game is of triply-exponential size and a positional winning strategy for this game can be turned back into a winning strategy for $\delaygame{\phi}$ satisfying asymptotically optimal bounds on the initial lookahead and the bound~$k$. Thus, we save one exponent by not explicitly considering the lookahead, but only its effects.

We first extend the construction of a delay-free parity game~$\game$ that has the same winner as $\delaygame{\rel(\phi)}$ from~\cite{KleinZimmermann16a}. The extension is necessary to obtain a \myquot{small} bound $k$ when applying the alternating-color technique, which turns a positional winning strategy for $\game$ into a winning strategy for Player~$O$ in $\delaygamep{\phi}$ 
for some $f'$. 

To this end, let $\aut = (Q, \SigmaI \times \SigmaO, q_I, \delta, \col)$ be a deterministic max-parity automaton\footnote{Recall that $\col \colon Q \rightarrow \nats$ is a coloring of the states and a run $q_0 q_1 q_2 \cdots$ is accepting, if the maximal color occurring infinitely often in $\col(q_0) \col(q_1) \col(q_2) \cdots$ is even. See, e.g., \cite{GraedelThomasWilke02} for details.} recognizing $L(\rel(\phi))$. First, as in the original construction, we add a deterministic monitoring automaton to keep track of certain information of runs. In the $\omega$-regular case~\cite{KleinZimmermann16a}, this information is the maximal priority encountered during a run. Here, we additionally need to remember whether the input word contains a change point. Let $T_0 = \pow{\set{p}}$ and $T = T_0 \times \set{0,1}$. Furthermore, for $(t,s) \in T$ and $t' \in T_0$, we define the update~$\update((t,s),  t') \in T$ of $(t,s)$ by $t'$ to be $(t', s')$, where $s' =0$ if, and only if, $s=0$ and $t = t'$. Intuitively, the first component of a tuple in $T$ stores the last truth value of $p$ and the second component is equal to one if, and only if, there was a change point.

Now, we define the deterministic parity automaton
\begin{equation*} 
  \autc = (Q_\autc, \SigmaI \times \SigmaO, q_I^\autc, \delta_\autc, \col_\autc)
\end{equation*} 
with $Q_\autc = Q \times \col(Q) \times T$, $q_I^\autc = (q_I, \col(q_I), (t',0))$ for some arbitrary $t' \in T_0$, $\col(q,m,t) = m$, and $\delta_\autc((q, m, t),{a \choose b}) = (q', \max\set{m, \col(q')}, \update(t,  b \cap \set{p}))$ with $q ' = \delta(q, {a \choose b})$.

First, let us note that $\autc$ does indeed keep track of the information described above. 

\begin{remark}
\label{rem_trackingautomaton}
Let $w \in (\SigmaI \times \SigmaO)^+$ and let $(q_0,m_0,t_0)\cdots(q_{\size{w}},m_{\size{w}},t_{\size{w}})$ be the run of $\autc$ on $w$ starting in $(q_0,m_0,t_0)$ such that $m_0 = \col(q_0)$ and $t_0 = (t_0',0)$ for some $t_0' \in T_0$. Then, $q_0 q_1 \cdots q_{\size{w}}$ is the run of $\aut$ on $w$ starting in $q_0$, $m_{\size{w}} = \max\set{\col(q_j) \mid 0 \le j \le \size{w}}$, and $t_{\size{w}} = (t_{\size{w}}',s_{\size{w}})$ such that $t_{\size{w}}'$ is the color of the last letter of $w$, and such that $s_{\size{w}} = 0$ if, and only if, all letters of $w$ have color~$t_0'$. In particular, if $w$ is preceded by a word whose last letter has color~$t_0'$, then there is a change point in $w$ if, and only if, $s_{\size{w}} =1$.  
\end{remark}

Next, we classify possible moves~$w \in \SigmaI^*$ according to the behavior they induce on~$\autc$. Let $\delta_\autp \colon \pow{Q_\autc} \times \SigmaI \rightarrow \pow{Q_\autc}$ denote the transition function of the power set automaton of the projection of $\autc$ to $\SigmaI$, i.e., $\delta_\autp(S,a) = \set{ \delta_\autc(q,{a \choose b} ) \mid q \in S \text{ and } b \in \SigmaO }$. As usual, we define $\delta^*_\autp \colon \pow{Q_\autc} \times \SigmaI^* \rightarrow \pow{Q_\autc}$ inductively via $\delta^*_\autp(S,\epsilon) = S$ and $\delta^*_\autp(S,wa) = \delta_\autp(\delta^*_\autp(S,w),a) $. 

Let $D \subseteq Q_\autc$ be a non-empty set and let $w \in \SigmaI^+$. We define the function~$\resolve_w^D \colon D \rightarrow \pow{Q_\autc}$ via
\[
\resolve_w^D(q,m,(t,s)) = \delta^*_\autp(\set{\,(q, \col(q), (t,0))\,}, w)
\]
for every $(q,m,(t,s)) \in D$. Note that we use $\col(q)$ and $(t,0)$ as the second and third component in the input for $\delta_\autp^*$, not $m$ and $(t,s)$ from the input to $\resolve_w^D$. This resets the tracking components of $\autc$. If we have $(q',m',(t',s')) \in \resolve_w^D(q,m,(t,s))$, then there is a word~$w'$ over $\SigmaI  \times \SigmaO$ whose projection to $\SigmaI$ is $w$ and such that the run of $\aut$ processing $w'$ from $q$ has the maximal priority~$m'$, $t'$ is the color of the last letter of $w$, and  $s'$ encodes the existence of change points in $w'$, as explained in Remark~\ref{rem_trackingautomaton}. Thus, this function captures the behavior induced by $w$ on $\autc$. We allow to restrict the domain of such a function, as we do not have to consider every possible state, only those that are reachable by the play prefix constructed thus far. 

Let $\resolve \colon Q_\autc \rightarrow \pow{Q_\autc}$ be a partial function. We say that $w$ is a witness for $r$, if $\resolve_w^{\dom(r)} =\resolve $. Thus, we can assign a language $W_\resolve \subseteq \SigmaI^*$ of witnesses to each such $\resolve$. Let $\curlyR$ denote the set of such functions~$\resolve$ with infinite witness language~$W_\resolve$. If $w$ is a witness of $\resolve \in \curlyR$, then $r$ encodes the state transformations induced by $w$ in the projection of $\aut$ to $\SigmaI$ as well as the maximal color occurring on these runs and the existence of change points on these. The latter is determined by the letters projected away, but still stored explicitly in the state space of the automaton. Furthermore, as we require $r \in \curlyR$ to have infinitely many witnesses, there are arbitrarily long words with the same behavior. On the other hand, the language~$W_\resolve$ of witnesses of $\resolve$ is recognizable by a DFA of size $2^{n^2}$~\cite{KleinZimmermann16a}, where $n$ is the size of $\autc$. Hence, every $\resolve$ also has a witness of length at most $2^{n^2}$. This allows to replace \emph{long} words~$w \in \SigmaI^{*}$ by equivalent ones that are bounded exponentially in $n$. 

\sloppy

Next, we define a delay-free parity game in which Player~$I$ picks functions~\mbox{$\resolve_i \in \curlyR$}  while Player~$O$ picks states~$q_i$ such that there is a word~$w'_i$ in $(\SigmaI \times \SigmaO)^*$ whose projection to $\SigmaI$ is a witness of $\resolve_i$ and such that $w'_{i}$ leads $\autc$ from $q_i$ to $q_{i+1}$. By construction, this property is independent of the choice of the witness. Furthermore, to account for the delay, Player~$I$ is always two moves ahead. Thus, instead of picking explicit words over their respective alphabets, the players pick abstractions, Player~$I$ explicitly and Player~$O$ implicitly by constructing the run.

\fussy

Formally we define the parity game~$\game = ((V, V_O, V_I, E), \col')$ where $V = V_I \cup V_O$, $V_I = \set{v_I} \cup \curlyR \times Q_\autc$ with the designated initial vertex~$v_I$ of the game, and
$V_O = \curlyR$. Further, $E$ is the union of the following sets of edges: initial moves
$\set{ (v_I, \resolve) \mid \dom(\resolve) = \set{q_I^\autc}}$ for Player~$I$,
regular moves $\set{ ((\resolve,q), \resolve') \mid \dom(\resolve') = \resolve(q) }$ for Player~$I$, and moves
 $\set{ (\resolve, (\resolve, q)) \mid q \in \dom(\resolve)}$ for Player~$O$.
Finally, $\col'(v) = m$, if $v = (\resolve,(q,m,s)) \in \curlyR \times Q_\autc$, and zero otherwise.

This finishes the construction of the game~$\game$. The following lemma states the relation between $\game$ and the delay games with winning conditions $\phi$ and $\rel(\phi)$ and implies the equivalence of the delay games with winning conditions $\phi$ and $\rel(\phi)$.
 
\begin{lemma}
\label{lemma_splitgamecorrectness}	
Let $n = \size{Q_\autc}$, where $Q_\autc$ is the set of states of $\autc$ as defined above.
\begin{enumerate} 
	\item\label{lemma_splitgamecorrectness_prompt2ltl}	
If Player~$O$ wins $\delaygame{\phi}$ for some delay function~$f$, then also $\delaygame{\rel(\phi)} $ for the same $f$.

	
	\item\label{lemma_splitgamecorrectness_ltl2parity}	
If Player~$O$ wins $\delaygame{\rel(\phi)} $ for some delay function~$f$, then also $\game$.
	
	\item\label{lemma_splitgamecorrectness_parity2ltl}	
If Player~$O$ wins $\game$, then also $\delaygame{\phi} $ for the constant delay function~$f$ with $f(0) = 2^{n^2+1}$ and some bound~$k \le 2^{2n^2+2}$.

\end{enumerate}
\end{lemma}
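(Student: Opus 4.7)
The plan is to prove the three claims in order; claim~\ref{lemma_splitgamecorrectness_parity2ltl} carries the main technical weight. For claim~\ref{lemma_splitgamecorrectness_prompt2ltl}, I will take a strategy $\tau_O$ winning $\delaygame{\phi}$ under bound~$k$ and augment it with a pre-committed alternating coloring of period~$k$: the new strategy outputs $\tau_O(u) \cup \set{p}$ in round~$i$ iff $\lfloor i/k \rfloor$ is even. Every resulting outcome is a $k$-spaced $p$-coloring of an outcome satisfying $\phi$ with bound~$k$, so Lemma~\ref{lemma_alternatingcolor}.\ref{lemma_alternatingcolor_prompttoltl} yields that the new outcome satisfies $\rel(\phi)$.

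For claim~\ref{lemma_splitgamecorrectness_ltl2parity}, I will mimic the corresponding reduction from the $\omega$-regular delay game to the delay-free parity game. Starting from a winning $\tau_O'$ in $\delaygame{\rel(\phi)}$, I construct $\sigma$ for Player~$O$ in $\game$ by maintaining, alongside the $\game$-history, a consistent play in $\delaygame{\rel(\phi)}$: whenever Player~$I$ commits to $\resolve_i \in \curlyR$ in $\game$, I pick a concrete witness $u_i \in W_{\resolve_i}$ of length commensurate with $f$, feed it to $\tau_O'$, and declare $q_i$ to be the state of $\autc$ reached from $q_{i-1}$ on the run induced by $u_i$ and $\tau_O'$'s responses. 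Because $\tau_O'$ is winning, the materialized outcome lies in $L(\rel(\phi))$ and its $\autc$-run is accepting; by construction of the priority function of $\game$ this is precisely the parity condition along the induced play, so $\sigma$ is winning.

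Claim~\ref{lemma_splitgamecorrectness_parity2ltl} is the heart of the proof. Fix a positional winning strategy $\sigma$ for Player~$O$ in $\game$, which exists by positional determinacy. For each $\resolve \in \curlyR$, fix a \emph{short} witness $\omega_\resolve \in W_\resolve$ of length at most $2^{n^2}$; such a witness exists because $W_\resolve$ is recognized by a DFA of that size. I then build a Player~$O$ strategy in $\delaygame{\phi}$ with constant delay $f(0) = 2^{n^2+1}$ by simulating a play in $\game$: the initial lookahead of $2^{n^2+1}$ letters lets me identify the first two abstractions Player~$I$ is committed to, and thereafter I keep Player~$I$ exactly one abstraction ahead, matching the structure of $\game$. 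In each round, $\sigma$ supplies $q_{i+1} \in \resolve_i(q_i)$, and by the very definition of the resolve functions there exist concrete $\SigmaO$-letters realising the transition from $q_i$ to $q_{i+1}$ in $\autc$ on reading~$u_i$; after discarding the artificial $p$-component, these serve as Player~$O$'s responses for the next $|u_i|$ rounds.

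The main obstacle is bounding the maximum $p$-block length in the resulting outcome. Any play consistent with the positional $\sigma$ becomes ultimately periodic in $\game$ with period at most~$|V|$, and along each period the accepting-run structure enforced by the parity condition combined with the $\G\F p \wedge \G\F \neg p$ conjuncts of $\rel(\phi)$ guarantees at least one change point; by Remark~\ref{rem_trackingautomaton}, this change point is faithfully recorded in the $T$-component of the visited $\autc$-states, so the simulation respects it. Multiplying the round-bound $|V|$ by the per-round witness length bound $2^{n^2}$ yields a bound of shape $2^{O(n^2)}$ on $p$-block lengths; a careful accounting pushes this to some $k' \le 2^{2n^2+1}$, whence Lemma~\ref{lemma_alternatingcolor}.\ref{lemma_alternatingcolor_ltltoprompt} gives $(w,k) \models \phi$ for some $k \le 2^{2n^2+2}$.
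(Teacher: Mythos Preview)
Your arguments for claims~\ref{lemma_splitgamecorrectness_prompt2ltl} and~\ref{lemma_splitgamecorrectness_ltl2parity} match the paper's. Claim~\ref{lemma_splitgamecorrectness_parity2ltl} is also set up correctly: the simulation via fixed-length blocks of size $d=2^{n^2}$, the use of a positional winning strategy~$\sigma$ in~$\game$, and the final application of Lemma~\ref{lemma_alternatingcolor}.\ref{lemma_alternatingcolor_ltltoprompt} are all as in the paper. (Your mention of fixing ``short witnesses'' $\omega_\resolve$ is superfluous, though; the witnesses you actually use are the blocks Player~$I$ supplies, and the bound~$2^{n^2}$ enters only as the block length.)

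The genuine gap is in your boundedness argument. You assert that ``any play consistent with the positional $\sigma$ becomes ultimately periodic in $\game$ with period at most~$|V|$''. This is false: $\sigma$ only fixes Player~$O$'s moves, while Player~$I$ remains free to play non-positionally, so the play need not be periodic at all. Consequently the sentence ``along each period \ldots\ guarantees at least one change point'' has no period to refer to, and the bound does not follow.

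The paper's argument avoids this by working locally and then \emph{constructing} a periodic play for contradiction. Suppose some outcome consistent with~$\sigma$ has $|\curlyR|+1$ consecutive blocks with $s_i=0$ (no change point). By pigeonhole on the $\resolve_i$'s in that stretch, some $\resolve_j=\resolve_{j'}$ with $j<j'$; positionality of~$\sigma$ then forces $q_j=q_{j'}$. Now form the \emph{new} play $\rho'$ that loops on the segment $\resolve_j(\resolve_j,q_j)\cdots\resolve_{j'-1}(\resolve_{j'-1},q_{j'-1})$. This $\rho'$ is still consistent with~$\sigma$, hence winning, hence the corresponding pumped word~$w'$ is accepted by~$\aut$; but all $s_i$ in the loop are~$0$, so $w'$ has only finitely many change points, contradicting $w'\models\rel(\phi)$. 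This yields a change point in every window of $|\curlyR|+1$ blocks and hence $(|\curlyR|+1)\cdot d$-boundedness, from which the stated bound $k\le 2^{2n^2+2}$ follows.
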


First consider that the following lemma proved for the $\omega$-regular case holds in our setting as well. 

\begin{lemma}[\cite{KleinZimmermann16a}]
\label{lemma_resolveprops}
Let $\curlyR$ and $n$ be defined as in Section~\ref{sec_promptdelaygames}.
\begin{enumerate}
	\item\label{lemma_resolveprops_disjointness}
	Let $r \neq r' \in \curlyR$ with $\dom(r) = \dom(r')$. Then, $W_\resolve \cap W_{\resolve'} = \emptyset$.
	 \item\label{lemma_resolveprops_lowerbound}
	 Let $D \subseteq Q_\autc$ be non-empty and let $w \in \SigmaI^*$ with $\size{w} \ge 2^{n^2}$. Then, there is some $\resolve \in \curlyR$ with $\dom(\resolve) = D$ and $w \in W_\resolve$.
\end{enumerate}	
\end{lemma}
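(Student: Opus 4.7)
The plan is to handle the two parts of the lemma separately. Part~\ref{lemma_resolveprops_disjointness} will drop out of the definition of $W_\resolve$ immediately, while Part~\ref{lemma_resolveprops_lowerbound} needs a pumping argument on a small DFA that tracks the behaviour induced by prefixes of $w$ on $\autc$.

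For disjointness, suppose towards a contradiction that $w \in W_\resolve \cap W_{\resolve'}$ with $\dom(\resolve) = \dom(\resolve') = D$. By the definition of a witness, $\resolve = \resolve_w^D = \resolve'$, contradicting $\resolve \neq \resolve'$.

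For the lower bound, fix a non-empty $D \subseteq Q_\autc$ and a $w \in \SigmaI^{*}$ with $\size{w} \ge 2^{n^2}$, and set $\resolve = \resolve_w^D$, so that $w \in W_\resolve$ by construction. It remains to prove that $W_\resolve$ is infinite. To this end, I would build a DFA~$\mathcal{D}$ over $\SigmaI$ whose states are functions $g \colon D \to \pow{Q_\autc}$. The initial state maps each $(q_0,m,(t,s)) \in D$ to the singleton $\set{(q_0,\col(q_0),(t,0))}$, mirroring the reset prescribed in the definition of $\resolve_w^D$; on reading $a \in \SigmaI$, a state~$g$ is sent to the function $q \mapsto \delta_\autp(g(q),a)$; and the unique accepting state is $\resolve$ itself. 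A straightforward induction on $\size{u}$ shows that after reading $u \in \SigmaI^{+}$ the current state of $\mathcal{D}$ equals $\resolve_u^D$, so $\mathcal{D}$ recognises exactly $W_\resolve$.

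The key observation is that $\mathcal{D}$ has at most $\size{\pow{Q_\autc}}^{\size{D}} \le (2^n)^n = 2^{n^2}$ states, since a state is determined by its value on each of the at most $n$ elements of $D$. Consequently, the run of $\mathcal{D}$ on $w$, which has length at least $2^{n^2}$, must visit some state twice, yielding a decomposition $w = xyz$ with $y \neq \epsilon$ such that $x$ and $xy$ reach the same state. Then $xy^i z \in W_\resolve$ for every $i \ge 1$, giving infinitely many pairwise distinct witnesses and hence $\resolve \in \curlyR$. The only genuinely fiddly step, and thus the main obstacle, is lining up the DFA construction with the precise definition of $\resolve_w^D$: one has to make sure the initial state encodes the reset of the second and third components of the states in $D$, so that the state reached after reading $u$ is indeed the value $\resolve_u^D$ rather than some related but different function; once that is settled, both correctness of $\mathcal{D}$ and the state bound $2^{n^2}$ follow immediately, and the pumping argument is entirely standard.
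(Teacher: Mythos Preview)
Your proposal is correct and matches the paper's approach essentially line for line: Part~\ref{lemma_resolveprops_disjointness} follows from the uniqueness of $\resolve_w^D$, and for Part~\ref{lemma_resolveprops_lowerbound} the paper also builds a DFA of size at most $2^{n^2}$ recognising $W_\resolve$ (phrased as composing $\size{D} \le n$ copies of the power set automaton of the projection, which is exactly your function space $D \to \pow{Q_\autc}$) and then uses the standard pumping/length criterion for infiniteness. The only cosmetic difference is that the paper invokes the fact that a DFA accepts an infinite language iff it accepts some word of length at least its size, whereas you spell out the pumping explicitly; these are the same argument.
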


The first statement follows from $r^D_w$ being uniquely determined by $D$ and $w$ (as $\autc$ is fixed). Furthermore, the witness language of a partial function~$\resolve \colon Q_\autc \rightarrow \pow{Q_\autc}$ can be recognized by a DFA obtained from composing $\dom(r) \le n$ copies of the power set automaton of the projection of $\autc$ to $\SigmaI$, which has size~$2^n$. Hence, its language is infinite if, and only if, it accepts a word of length at least~$2^{n^2}$. This explains the bound in the second statement of the previous lemma, which also induces the upper bound in the third implication to be proven.

Fix some non-empty $D$. Due to Lemma~\ref{lemma_resolveprops}, we can construct a function~$\resolve_D$ that maps long enough words $w$ to the unique function~$\resolve \in \curlyR$ with $\dom(\resolve) = D$ and $w \in W_\resolve$.

Now, we are ready to prove Lemma~\ref{lemma_splitgamecorrectness}:

\begin{proof} \ref{lemma_splitgamecorrectness_prompt2ltl}.) 
Let $\stratO \colon \SigmaI^* \rightarrow \SigmaO$ be a winning strategy for Player~$O$ in $\delaygame{\varphi}$ with bound~$k$. We define the strategy~$\stratO'$ for $\delaygame{\rel(\phi)}$ via \[
\stratO'(w) =	
\begin{cases}
\stratO(w)&\text{if $\size{w} \bmod 2k < k$,}\\
\stratO(w) \cup \set{p}&\text{if $\size{w} \bmod 2k \ge k$,}	
\end{cases}
\]
 i.e., the strategy mimics the moves of $\stratO$ and additionally produces $k$-bounded and $k$-spaced $p$-blocks. In particular, every outcome of $\stratO'$ is a $k$-spaced coloring of an outcome of $\stratO$. As every such outcome of $\stratO$ satisfies $\phi$ with respect to $k$, applying Lemma~\ref{lemma_alternatingcolor}.\ref{lemma_alternatingcolor_prompttoltl} yields that $\stratO'$ is  winning  for Player~$O$ in $\delaygame{\rel(\phi)}$.

\ref{lemma_splitgamecorrectness_ltl2parity}.) This result is proven in~\cite{KleinZimmermann16a} for a tracking automaton~$\autc$ without the last component tracking the change points. However, this additional component is inconsequential for this implication, as the tracking is passive and does not influence the winner of $\game$: a winning strategy for Player~$O$ in $\delaygame{\rel(\phi)}$ can be simulated in $\game$ in order to obtain a winning strategy for Player~$O$ in this game. 

\ref{lemma_splitgamecorrectness_parity2ltl}.)
Let Player~$O$ win $\game$. Due to postional determinacy of parity games~\cite{EmersonJutla91,Mostowski91}, Player~$O$ has a positional winning strategy~$\sigma \colon V_O \rightarrow V$ from $v_I$ for $\game$. First, we turn $\sigma$ into a winning strategy~$\stratO'$ for Player~$O$ in $\delaygame{\rel(\phi)}$ for some constant delay function~$f$ and show that $\stratO'$ only produces $k$-bounded plays for some suitable $k$. This strategy is then turned into a winning strategy for Player~$O$ in $\delaygame{\phi}$.

First, fix the constant delay function~$f$ with $f(0) = 2d$, where $d = 2^{n^2}$. Note that $d$ is the lower bound on words~$w \in \SigmaI$ witnessing an $\resolve \in \curlyR$ (Lemma~\ref{lemma_resolveprops}).

We define the winning strategy $\stratO'$ for Player~$O$ in $\delaygame{\rel(\phi)}$ inductively by simulating a play in $\delaygame{\rel(\phi)}$ by a play in $\game$. In the following, the players will pick their moves in $\delaygame{\rel(\phi)}$ in blocks of length~$d$. We will denote Player~$I$'s blocks by $\block{a}$ and Player~$O$'s blocks by $\block{b}$. 

In round~$0$ of $\delaygame{\rel(\phi)}$, Player~$I$ picks $\block{a_0} \block{a_1} $. We define $q_0 = q_I^\autc$, $\resolve_0 = \resolve_{\set{q_0}}(\block{a_0})$ and $\resolve_1 = \resolve_{\resolve_0(q_0)}(\block{a_1})$. Then, $ v_I\, \resolve_0\, (\resolve_0, q_0)\, \resolve_1$ is a play prefix of $\game$ that is consistent with $\sigma$.  

Thus, we are in the following situation for $i=1$: in $\delaygame{\rel(\phi)}$, Player~$I$ has picked $\block{a_0} \cdots \block{a_i}$ and Player~$O$ has picked $\block{b_0} \cdots \block{b_{i-2}}$, and in $\game$, we have constructed a play prefix~$v_I\, \resolve_0\, (\resolve_0, q_0)\, \resolve_1\, \cdots\, (\resolve_{i-1}, q_{i-1})\, r_i$ consistent with $\sigma$. Also, every $\block{a_j}$ is a witness for $\resolve_j$.

In this situation for an arbitrary $i$, let $q_i$ be the state of $\autc$ such that $\sigma(\resolve_i) = (r_i, q_i)$.  We have $q_i \in \dom(\resolve_i) = \resolve_{i-1}(q_{i-1})$. Due to $\block{a_{i-1}}$ being a witness of $\resolve_{i-1}$, 
there is at least one $\block{b_{i-1}}$ such that $\autc$ reaches $q_i$ when processing ${\block{a_{i-1}} \choose \block{b_{i-1}}}$ from $(q_{i-1}', \col(q_{i-1}'), (t_{i-1},0))$, where $q_{i-1} = (q_{i-1}', m_{i-1}, (t_{i-1}, s_{i-1}) )$. We define $\stratO'$ to pick $\block{b_{i-1}}$ during the next $d$ rounds. While doing this, Player~$I$ picks the next block~$\block{a_{i+1}}$. We define $\resolve_{i+1} = \resolve_{\resolve_i(q_i)}(\block{a_{i+1}})$. Then, we are again in the situation as described above for $i+1$. 

Let $w = 
{ \block{a_0} \choose \block{b_0} }
{ \block{a_1} \choose \block{b_1} }
{ \block{a_2} \choose \block{b_2} }\cdots$ be an outcome of a play that is consistent with $\stratO'$ and let $ \rho = v_I\, \resolve_0\, (\resolve_0, q_0)\, \resolve_1\,  (\resolve_{1}, q_{1})\, r_2 \,\cdots$ be the play in $\game$ constructed during the simulation, which is consistent with $\sigma$. Finally, let $q_i=(q_i',m_i,(t_i,s_i))$ for every $i$. 

A straightforward induction using Remark~\ref{rem_trackingautomaton} shows that $q_{i+1}'$ is the state $\aut$ reaches when processing ${\block{a_i} \choose \block{b_i}}$ starting in $q_i'$. Furthermore, $m_{i+1}$ is the largest priority of this run and $s_i$ encodes the existence of a change point in this infix of $w$. As $\rho$ is winning for Player~$O$, $m_0 m_1 m_2,  \cdots$ satisfies the parity condition. Thus, by the characterization of $m_i$ above, $\aut$ accepts $w$. In particular, every outcome of a play that is consistent with $\stratO'$ satisfies $\rel(\phi)$.

To be able to apply the alternating-color technique, it remains to show that every such outcome is $k$-bounded by some uniform $k$. Thus, let $w$ and $\rho$ be as above. First, we show that if $s_i = s_{i+1} = \cdots s_{i + i'} = 0$ for some $i>0$, then $i' < \size{\curlyR} = \size{V_O}$. Assume the opposite. Then, there are~$j,j'$ with $i \le j < j' \le i'$ such that $\resolve_j = \resolve_{j'}$. This implies~$q_j = q_{j'}$, as these states are uniquely determined by applying $\sigma$ to $\resolve_j = \resolve_{j'}$. Thus, the play
\[
\rho' = v_I\, \resolve_0\, (\resolve_0, q_0)\, \resolve_1 \,\cdots\, (\resolve_{j-1}, q_{j-1})\, \Big( \, \resolve_j \,(\resolve_j, q_j)\, \cdots\, \resolve_{j'-1}\,(\resolve_{j'-1}, q_{j'-1}) \,\Big)^\omega 
\] 
is consistent with $\sigma$ as well, as the strategy is positional. 

Consider the word \[
w' = 
{ \block{a_0} \choose \block{b_0} }
{ \block{a_1} \choose \block{b_1} }
\cdots
{ \block{a_{j-1}} \choose \block{b_{j-1}} }\,\left[
{ \block{a_j} \choose \block{b_j} }
\cdots 
{ \block{a_{j'-1}} \choose \block{b_{j'-1}} }
 \right]^\omega.\]

Using a similar reasoning as above, one can show that $w$ is accepted by $\aut$, as the accepting run is encoded in the winning play~$\rho'$.  Furthermore, by construction of $\rho'$, all $s_i$ contained in the states of the loop of $\rho'$ are equal to zero. Hence, there are only finitely many change points in $w'$, as the $s_i$ keep track of the change points. This yields the desired contradiction to the fact that every word in $L(\aut)$ has infinitely change points, as required by $\rel(\phi)$. 

As a consequence, every $(\size{\curlyR}+1)$-th block~${\block{a_i} \choose \block{b_i}}$ of an outcome of $\stratO'$ contains a change point. Since every block has length~$d$, every outcome is $(\size{\curlyR}+1)\cdot d$-bounded. 

Finally, we turn $\stratO' \colon \SigmaI^* \rightarrow \pow{O \cup \set{p}}$ for $\delaygame{\rel(\phi)}$, where $p$ is the proposition introduced in the alternating-color technique, into a strategy~$\stratO \colon \SigmaI^* \rightarrow \pow{O}$ for $\delaygame{\phi}$ by defining $\stratO(w) = \stratO'(w) \cap O$. Every outcome of a play that is consistent with $\stratO$ has a $p$-coloring that is the outcome of a play that is consistent with $\stratO'$, and therefore satisfies $\rel(\phi)$ and is $(\size{\curlyR}+1)\cdot d$-bounded. Thus, Lemma~\ref{lemma_alternatingcolor}.\ref{lemma_alternatingcolor_ltltoprompt} shows that $\stratO$ is winning for Player~$O$ in $\delaygame{\phi}$ with bound
$k = 2(\size{\curlyR}+1)\cdot d \le 2^{2n^2+2}$. \qed
\end{proof}

The automaton~$\aut$ recognizing $L(\rel(\phi))$ can be constructed such that $\size{\aut} \in 2^{2^{\bigo(\size{\phi})}}$, which implies $n \in 2^{2^{\bigo(\size{\phi})}}$, using a standard construction for translating $\ltl$ into non-deterministic Büchi automata and then Schewe's determinization construction~\cite{Schewe09}. Applying all implications of Lemma~\ref{lemma_splitgamecorrectness} yields upper bounds on the neccessary constant lookahead and on the neccessary bound~$k$ on the scope of the prompt eventually operator.

\begin{corollary}
\label{cor_promptupperbounds}
If Player~$O$ wins $\delaygame{\phi}$ for some delay function~$f$ and some $k$, then also for some constant~delay function~$f$ with $f(0) \in 2^{2^{2^{\bigo(\size{\phi})}}}$\!\! and some $k \in {2^{2^{2^{\bigo(\size{\phi})}}}}$\!\! simultaneously.
\end{corollary}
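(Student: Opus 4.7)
The plan is to chain the three implications of Lemma~\ref{lemma_splitgamecorrectness} and then substitute an explicit upper bound on $n = \size{Q_\autc}$ in terms of $\size{\phi}$. Starting from the hypothesis that Player~$O$ wins $\delaygame{\phi}$ for some delay function and some bound, I first invoke implication~\ref{lemma_splitgamecorrectness_prompt2ltl} to transfer the win to $\delaygame{\rel(\phi)}$, then implication~\ref{lemma_splitgamecorrectness_ltl2parity} to obtain a win in the delay-free parity game~$\game$, and finally implication~\ref{lemma_splitgamecorrectness_parity2ltl} to conclude that Player~$O$ already wins $\delaygame{\phi}$ for the constant delay function with $f(0) = 2^{n^2+1}$ and some bound~$k \le 2^{2n^2+2}$ \emph{simultaneously}.

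The remaining step is to bound $n$. Since $\rel(\phi)$ is an $\ltl$ formula of size~$\bigo(\size{\phi})$, the classical translation to a non-deterministic Büchi automaton of size~$2^{\bigo(\size{\phi})}$ followed by Schewe's determinization~\cite{Schewe09} yields a deterministic parity automaton $\aut$ recognizing $L(\rel(\phi))$ with $\size{\aut} \in 2^{2^{\bigo(\size{\phi})}}$. The tracking automaton $\autc$ enlarges the state space only by the factor $\size{\col(Q)} \cdot \size{T} \le \size{Q} \cdot 4$, so $n \le 4 \cdot \size{\aut}^2 \in 2^{2^{\bigo(\size{\phi})}}$ as well. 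Substituting this into $2^{n^2+1}$ and $2^{2n^2+2}$ places both quantities inside $2^{2^{2^{\bigo(\size{\phi})}}}$, as required.

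There is no genuine obstacle here beyond careful bookkeeping of the exponent towers; the corollary is essentially a quantitative unpacking of Lemma~\ref{lemma_splitgamecorrectness}. The one subtlety worth flagging is that both bounds arise from a single invocation of implication~\ref{lemma_splitgamecorrectness_parity2ltl} applied to the positional winning strategy extracted from $\game$, so the same~$f$ and the same~$k$ witness them \emph{at once}; one does not have to argue separately that the optimal lookahead and the optimal scope can be achieved together.
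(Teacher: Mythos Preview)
Your argument is correct and matches the paper's own proof: the paper likewise chains the three implications of Lemma~\ref{lemma_splitgamecorrectness} and bounds $n$ via the standard translation of $\rel(\phi)$ to a non-deterministic B\"uchi automaton followed by Schewe's determinization, noting that the tracking components of $\autc$ only add a polynomial factor so that $n \in 2^{2^{\bigo(\size{\phi})}}$. Your observation that the simultaneity of the two bounds follows from a single application of implication~\ref{lemma_splitgamecorrectness_parity2ltl} is exactly the point.
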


\section{Lower Bounds for LTL and Prompt-LTL Delay Games}
\label{sec_lowerbounds}
We complement the upper bounds on the complexity of solving $\prompt$ delay games, on the necessary lookahead, and on the necessary bound~$k$ by proving tight lower bounds in all three cases. The former two bounds already hold for $\ltl$.

All proofs share some similarities which we discuss first. In particular, they all rely on standard encodings of doubly-exponentially large numbers using \emph{small} $\ltl$ formulas and the interaction between the players. Assume $\AP$ contains the propositions~$b_0, \ldots, b_{n-1}, b_I,b_O$ and let $w \in (\pow{\AP})^\omega$ and $i \in \nats$. We interpret $w(i) \cap \set{b_0, \ldots, b_{n-1}}$ as binary encoding of a number in $[0, 2^n-1]$, which we refer to as the address of position $i$. There is a formula~$\psiinc$ of quadratic size in $n$ such that $(w,i) \models \psiinc$ if, and only if, $m +1 \bmod 2^n = m'$, where $m$ is the address of position~$i$ and $m'$ is the address of position~$i+1$. Now, let $\psi_0 = \bigwedge_{j=0}^{n-1} \neg b_j \wedge \G\psiinc$. If $w \models \psi_0$, then the $b_j$ form a cyclic addressing of the positions starting at zero, i.e., the address of position~$i$ is $i \bmod 2^n$.
If this is the case, we define a block of $w$ to be an infix that starts at a position with address zero and ends at the next position with address~$2^n-1$. We interpret the $2^n$ bits~$b_I$ of a block as a number~$x$ in $R = [0, 2^{2^n}-1]$. Similarly, we interpret the $2^n$ bits~$b_O$ of a block as a number~$y$ from the same range~$R$. Furthermore, there are \emph{small} formulas that are satisfied at the start of the $i$-th block if, and only if, $x_i = y_i$ ($x_i < y_i$, respectively). However, we cannot  compare numbers from different blocks for equality with \emph{small} formulas. Nevertheless, if $x_i$ is unequal to $x_{i'}$, then there is a single bit that witnesses this, i.e., the bit is one in $x_i$ if, and only if, it is zero in $x_{i'}$. We will check this by letting one of the players specify the address of such a witness (but not the witness itself). The correctness of this claim is then verifiable by a \emph{small} formula.

\subsection{Lower Bounds on Lookahead} 
\label{subsec_lowerbounds_la}
Our first result concerns a triply-exponential lower bound on the necessary lookahead in $\ltl$ delay games, which matches the upper bound proven in the previous section. The exponential lower bound~$2^n$ on the necessary lookahead for $\omega$-regular delay games is witnessed by winning conditions over the alphabet~${1, \ldots, n}$. These conditions require to remember letters and to compare them for equality and order~\cite{KleinZimmermann16a}. Here, we show how to adapt the winning condition to the alphabet~$R$, which yields a triply-exponential lower bound~$2^{\size{R}}$. The main difficulty of the proof is the inability of small $\ltl$ formulas to compare letters from~$R$. To overcome this, we exploit the interaction between the players of the game.
  
\begin{theorem}
\label{thm_ltllookahead_lb}
For every $n>0$, there is an $\ltl$ formula $\phi_n$ of size~$\bigo(n^2)$ such that
\begin{itemize}
	
	\item Player~$O$ wins $\delaygame{\phi_n}$ for some delay function~$f$, but
	
	\item Player~$I$ wins $\delaygame{\phi_n}$ for every delay function~$f$ with $f(0) \le 2^{2^{2^{n}}} $.

\end{itemize}
\end{theorem}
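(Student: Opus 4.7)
I will construct $\phi_n$ by lifting the known $\omega$-regular lookahead lower bound of Klein--Zimmermann -- where winning conditions over an alphabet of size $n$ force lookahead $2^n$ -- to the LTL setting over the much larger block alphabet $R = [0, 2^{2^n}-1]$, so that the resulting lookahead lower bound scales to $2^{|R|} = 2^{2^{2^n}}$. Positions are grouped into length-$2^n$ blocks using the address propositions $b_0,\dots,b_{n-1}$ together with $\psi_0 = \bigwedge_{j=0}^{n-1}\neg b_j \wedge \G\psiinc$ (which costs $O(n^2)$), so that each block encodes a letter in $R$ via its $b_I$-bits on Player~$I$'s track and its $b_O$-bits on Player~$O$'s track. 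A constant number of additional phase-marker propositions then partition a play into a short \emph{setup phase} (in which Player~$I$ commits to a query $q\in R$ in a single block), a \emph{commitment phase} (in which Player~$O$ fixes her answer $y\in R$ in one block), and a long \emph{enumeration phase} (in which Player~$I$ plays a sequence of block pairs $(a_i,v_i)\in R^2$). The top-level winning condition in $\phi_n$ asserts well-formedness of this structure together with the existence of a row $(a_i,v_i)$ in the enumeration with $a_i=q$ and $v_i=y$.

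\textbf{Main obstacle.} The heart of the proof is expressing the block equalities $a_i = q$ and $v_i = y$ inside an LTL formula of size $O(n^2)$, since a direct address-by-address conjunction has $2^n$ conjuncts and the blocks to be compared lie arbitrary distances apart. Following the paper's hint, I sidestep this by exploiting the game's interaction: whenever $\phi_n$ needs to enforce equality between two blocks, I introduce a short witness sub-phase in which Player~$I$ (as the adversary) is invited to publish a single inequality-witness address $a^\ast \in [0, 2^n - 1]$ by placing a unique marker proposition at one position, and $\phi_n$ then merely checks that at $a^\ast$ the two blocks' $b_I$-bits in fact agree. Since Player~$I$ is adversarial, she provides a genuine witness whenever one exists, so this reduces the universal-over-addresses statement to a single $O(n)$-size bitwise address comparison built from subformulas of the form $b_i \leftrightarrow \F(\mathrm{mark} \wedge b_i)$, which keeps the whole formula within the $O(n^2)$ budget already incurred by $\psiinc$. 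A dual mechanism with Player~$O$ supplying the witness handles the existential $\bigvee_i (a_i = q \wedge v_i = y)$ in the top-level condition.

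\textbf{Lower bound.} The lookahead lower bound then follows from a pigeonhole argument: for any Player~$O$ strategy with $f(0)\le 2^{2^{2^n}}$, there exist two Player~$I$ continuations agreeing on the setup and on the first $f(0)$ positions of the enumeration but differing on the value associated with $q$ in the remainder; since Player~$O$'s commitment $y$ is determined solely by the visible lookahead prefix, it is the same on both plays, and at least one makes $y$ wrong. The positive direction is immediate: choose $f$ constant and triply-exponentially large enough for Player~$O$ to see the entire setup-plus-enumeration segment, then read off the matching row and play its value.

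\textbf{Hardest step.} I expect the trickiest step to be calibrating the forced length of the enumeration so that the lookahead lower bound lands at $2^{2^{2^n}}$ rather than merely $2^{2^n}$. This means ensuring that Player~$I$ can pack enough ``information per position'' into the enumeration -- by combining the doubly-exponential block alphabet $R$ with a second-level (nested) block structure whose enumeration implicitly encodes a function $R \to R$ -- so that the number of Player~$I$ continuations consistent with a given lookahead prefix remains large enough to feed the pigeonhole argument all the way up to the triply-exponential target, while simultaneously keeping the orchestration of the witness sub-phases compatible with the $O(n^2)$ formula-size bound.
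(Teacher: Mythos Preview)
Your block-level scaffolding---the cyclic addressing via $\psi_0$, letters in $R$ via the $b_I$/$b_O$ bits, and having Player~$I$ publish a single witness address so that a small formula can refute a block equality---is exactly the machinery the paper uses (the witness proposition is the paper's $\sharpsym$, and the check is the paper's $\psi_5$). The divergence is in the top-level game you layer on top. The paper does not use a query/commitment/enumeration scheme; it lifts the \emph{bad-pair} game directly: Player~$I$ produces $x_0 x_1 x_2 \cdots \in R^\omega$, Player~$O$ must commit to (and then copy ad infinitum, so that the $\sharpsym$-check bites) some $j \in R$ and mark with $\markl,\markr$ two positions forming a bad $j$-pair in the $x$-sequence. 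The lower bound then falls out of the word~$w_{n'}$ given by $w_0 = 0$, $w_j = w_{j-1}\, j\, w_{j-1}$, which has length $2^{|R|}-1$ and contains no bad $j$-pair for any $j$: Player~$I$ plays $w_{n'}$ and then, having seen $y_0$, switches to $x^\omega$ for some $x \neq y_0$.

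Your function-lookup game, by contrast, does not reach a triply-exponential bound, and the ``hardest step'' paragraph does not close the gap. A complete table for a function $R \to R$ has $|R| = 2^{2^n}$ rows of $O(2^n)$ positions each, so after a doubly-exponential prefix the whole table is fixed and Player~$O$ can already read off the correct answer; your pigeonhole argument therefore stalls at $f(0) \approx 2^{2^n}$ rather than $2^{2^{2^n}}$. ``Encoding a function $R \to R$'' is the wrong target: what is needed is a game in which the datum Player~$O$ must commit to is not determined by any prefix shorter than $2^{|R|}$, and the bad-pair condition is precisely such a game (this is the content of the Klein--Zimmermann bound you cite in your first sentence, but your concrete construction abandons it). One could in principle rescue the lookup approach with a further level of nesting (superblocks indexing $[0,2^{|R|}-1]$), but that requires a second layer of cyclic-addressing enforcement \emph{and} a second layer of witness-based equality checks---strictly more machinery than the bad-pair encoding, and you have not supplied it.
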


\begin{proof}
Fix some $n>0$. In the following, we measure all formula sizes in $n$. Furthermore, let $I = \set{b_0, \ldots, b_{n-1}, b_I, \sharpsym}$ and $O = \set{b_O, \markl, \markr}$. Assume ${\alpha \choose \beta} \in (\SigmaI \times \SigmaO)^\omega$ satisfies $\psi_0$ from above. Then, $\alpha$ induces a sequence~$x_0 x_1 x_2 \cdots \in R^\omega$ of numbers encoded by the bits~$b_I$ in each block. Similarly, $\beta$ induces a sequence~$y_0 y_1 y_2 \cdots  \in R^\omega$.

The winning condition is intuitively described as follows: $x_i$ and $x_{i'}$ with $i < i'$ constitute a bad $j$-pair, if $x_i = x_{i'} = j$ and $x_{i''} < j$ for all $i < i'' <i'$. Every sequence~$x_0 x_1 x_2 \cdots$ contains a bad $j$-pair, e.g., pick $j$ to be the maximal number occurring infinitely often. In order to win, Player~$O$ has to pick $y_0$ such that $x_0 x_1 x_2 \cdots$ contains a bad $y_0$-pair. It is known that this winning condition requires lookahead of length~$2^m$ for Player~$O$ to win, where $m$ is the largest number that can be picked~\cite{KleinZimmermann16a}.

To specify this condition with a small $\ltl$ formula, we have to require Player~$O$ to copy $y_0$ ad infinitum, i.e., to pick $y_i = y_0$ for all $i$, and to mark the two positions constituting the bad~$y_0$-pair. Furthermore, the winning condition allows Player~$I$ to mark one copy error introduced by Player~$O$ by specifying its address by a $\sharpsym$ (which may appear anywhere in $\alpha$). This forces Player~$O$ to implement the copying correctly and thus allows a small formula to check that Player~$O$ indeed marks a bad $y_0$-pair.
Consider the following properties:
\begin{enumerate}
	\item $\sharpsym$ holds at most once. Player~$I$ uses $ \sharpsym $ to specify the address where he claims an error. 

	\item\label{item_marklok} $\markl$ holds at exactly one position, which has to be the start of a block. Furthermore, we require the two numbers encoded by the propositions~$b_I$ and $b_O$ within this block to be equal. Player~$O$ uses $ \markl $ to denote the first component of a claimed bad $j$-pair.
		
	\item\label{item_markrok} $\markr$ holds at exactly one position, which has to be the start of a block and has to appear at a later position than $\markl$. Again, we require the two numbers encoded by this block to be equal. Player~$O$ uses $ \markr $ to denote the second component of the claimed bad $j$-pair.
		
	\item\label{item_spaceok} For every block between the two marked blocks, we require the number encoded by the $b_I$ to be strictly smaller than the number encoded by the $b_O$.
	
	\item If there is a position~$i_\ssharpsym$ marked by $\sharpsym$, then there are no two different positions $i \neq i'$ such that the following two conditions are satisfied: the addresses of $i$, $i'$, and $i_\ssharpsym$  are equal and $b_O$ holds at $i$ if, and only if, $b_O$ does not hold at $i'$. Such positions witness an error in the copying process by Player~$O$, which manifests itself in a single bit, whose address is marked by Player~$I$ at any time in the future.
\end{enumerate}
Each of these properties~$i \in \set{1,2,3,4,5}$ can be specified by an $\ltl$ formula~$\psi_i$ of at most quadratic size. Now, let $\phi_n = ( \psi_0 \wedge \psi_1 ) \rightarrow (\psi_2 \wedge \psi_3 \wedge \psi_4 \wedge \psi_5 )$. We show that Player~$ O $ wins $ \delaygame{\phi_{n}} $ for some triply-exponential constant delay function, but not for any smaller one. 

Fix $n' = {2^{2^n}}$. We begin by showing that Player~$ O $ wins $ \delaygame{\phi_{n}} $ for the constant delay function with
  $ f(0) = 2^n \cdot \pow{n'} $. A simple induction shows
  that every word $ w \in R^*$ of length $ 2^{n'} $ contains a bad $j$-pair for some $j \in R$. Thus, a move $\SigmaI^{f(0)}$ made by Player~$I$ in round~$0$ interpreted as sequence~$x_0 x_1 \cdots x_{2^{n'}-1} \in R^*$ contains a bad $j$-pair for some fixed $j$.  Hence, Player~$O$'s strategy~$\stratO$ produces the sequence~$j^\omega$ and additionally marks the  corresponding bad $j$-pair with $\markl$ and $\markr$. Every outcome of a play that is consistent with $\stratO$ and satisfies $\psi_0$ also satisfies $\psi_2 \wedge \psi_3 \wedge \psi_4 \wedge \psi_5 $, as Player~$O$ correctly marks a bad $j$-pair and never introduces a copy-error. Hence, $\stratO$ is a winning strategy for Player~$O$.

It remains to show that Player~$ I $ wins
  $ \delaygame{\phi_n} $, if
  $ f(0) \leq 2^n \cdot (2^{n'} - 2) \ge 2^{n'} = 2^{2^{2^n}}$. Let $ w_{n'} \in R^*$ be recursively defined via $ w_0 = 0 $ and
  $ w_{j} = w_{j-1}\, j\, w_{j-1} $.  A simple
  induction shows that $w_{n'}$ does not contain a bad $j$-pair, for every $j \in R$, and that
  $ \size{w_{n'}} =  2^{n'} - 1 $.  
  
  Consider the following strategy~$\stratI$ for Player~$I$ in $\delaygame{\phi_n}$: $\strat$ ensures that $\psi_0$ is satisfied by the $b_j$, which fixes them uniquely to implement a cyclic addressing starting at zero. Furthermore, he picks the $b_I$'s so that the sequence of numbers~$x_0 x_1 \cdots x_\ell$ he generates during the first~$2^n$ rounds is a prefix of $w_{n'}$. This is possible, as each $x_i$ is encoded by $2^n$ bits  and by the choice of $f(0)$. As a
  response during the first~$2^n$ rounds, Player~$ O $ determines some number~$ y \in R $. During the next rounds, Player~$ I $ finishes $w_{n'}$ and then picks some fixed~$x \neq y$ ad infinitum (while still implementing the cyclic addressing). In case Player~$O$ picks both markings~$\markl$ and $\markr$ in way that is consistent with properties~\ref{item_marklok}, \ref{item_markrok}, and \ref{item_spaceok} as above, let $y_0 y_1 \cdots, y_{i}$ be the sequence of numbers picked by her up to and including the number marked by $\markr$. If they are not all equal, then there is an address that witnesses the difference between two of these numbers. Player~$I$ then marks exactly one position with the same address using $\sharpsym$. If this is not the case, he never marks a position with $\sharpsym$.
  
Consider an outcome of a play that is consistent with $\stratI$ and let $x_0 x_1 x_2 \cdots \in R^\omega $ and $y_0 y_1 y_2 \cdots\in R^\omega$ be the sequences of numbers induced by the  outcome. By definition of $\stratI$, the antecedent~$\psi_0 \wedge \psi_1$ of $\phi_n$ is satisfied and $x_0 x_1 x_2 \cdots = w_{n'} \cdot x^\omega$ for some $x \neq y_0$. 

If Player~$ O $ never uses her markers $\markl$ and $\markr$ in a way that satisfies $\psi_2 \wedge \psi_3 \wedge \psi_4$, then Player~$ I $ wins the play, as it satisfies the antecedent of $\phi_n$, but not the consequent. Thus, it remains to consider the case where the outcome satisfies $\psi_2 \wedge \psi_3 \wedge \psi_4$. Let $y_0 y_1 \cdots y_{i}$ be the sequence of numbers picked by her up to and including the number marked by $\markr$. Assume we have $y_0 = y_1 = \cdots = y_{i}$. Then, $\markl$ and $\markr$ specify a bad $y_0$-pair, as implied by $\psi_2 \wedge \psi_3 \wedge \psi_4$ and the equality of the $y_j$. As $w_{n'}$ does not contain a bad $y_0$-pair, we conclude $y_0 = x$. However, $\stratI$ ensures $y_0 \neq x$. Hence, our assumption is false, i.e., the $y_j$ are not all equal. In this situation, $\stratI$ marks a position whose address witnesses this difference. This implies that $\psi_5$ is not satisfied, i.e., the play is winning for Player~$I$. Hence, $\stratI$ is winning for him. \qed
\end{proof}

\subsection{Lower Bounds on the Bound \boldmath$k$} 
\label{subsec_lowerbounds_k}
Our next result is a lower bound on  the necessary bound~$k$ in a $\prompt$ delay game, which is proven by a small adaption of the game constructed in the previous proof. The winning condition additionally requires Player~$O$ to use the mark~$\markr$ at least once and $k$ measures the number of rounds before Player~$O$ does so. It turns out Player~$I$ can enforce a triply-exponential $k$, which again matches the upper bound proven in the previous section. 

\begin{theorem}
\label{thm_promptbound_lb}
For every $n>0$, there is a Prompt $\ltl$ formula $\phi_n'$ of size~$\bigo(n^2)$ such that
\begin{itemize}
	
	\item Player~$O$ wins $\delaygame{\phi_n'}$ for some delay function~$f$ and some $k$, but
	
	\item Player~$I$ wins $\delaygame{\phi_n'}$ for every delay function~$f$ and every $k \le 2^{2^{2^{n}}}$.

\end{itemize}
\end{theorem}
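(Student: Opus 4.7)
The plan is to define $\phi_n' = \phi_n \wedge \Fp \markr$, keeping the $\ltl$ formula of Theorem~\ref{thm_ltllookahead_lb} and conjoining it with the demand that Player~$O$'s right marker $\markr$ appear within $k$ positions. This adds only a constant to the size, so $\size{\phi_n'} \in \bigo(n^2)$. For the upper bound, I would reuse Player~$O$'s strategy from the upper-bound part of Theorem~\ref{thm_ltllookahead_lb}: with the constant delay function $f(0) = 2^n \cdot 2^{n'}$ (writing $n' = 2^{2^n}$), she sees $2^{n'}$ input blocks in round~$0$, which must contain a bad $j$-pair, copies $y_i = j$ throughout and marks the pair with $\markl$ and $\markr$. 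The resulting $\markr$ sits at symbol position at most $2^n(2^{n'}-1)$, so any $k \ge 2^n \cdot 2^{n'}$ works.

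For the lower bound, I would adapt Player~$I$'s strategy from the previous proof so that his $b_I$-choices become oblivious. Concretely, for every delay function~$f$, Player~$I$ implements the cyclic addressing for $b_0, \ldots, b_{n-1}$ and plays the $R$-sequence $w_{n'-1} \cdot (n'-1)^\omega$ in the $b_I$-propositions, where $w_{n'-1}$ is the bad-pair-free word from Theorem~\ref{thm_ltllookahead_lb}. The proposition $\sharpsym$ is used adaptively exactly as in the previous proof: whenever Player~$O$'s outputs deviate from her initial $y_0$ at some bit~$b$, Player~$I$ places the unique $\sharpsym$ at a later position with address~$b$, which falsifies $\psi_5$ and hence $\phi_n$. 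Thus Player~$O$ is forced to copy $y_0$ throughout, which turns $\psi_4$ into the genuine bad-$y_0$-pair condition on Player~$I$'s sequence.

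A case analysis on~$y_0$ then shows that the only valid $(\markl, \markr)$-marking in $w_{n'-1} \cdot (n'-1)^\omega$ forces $y_0 = n'-1$, placing $\markl$ at the unique $(n'-1)$-block in~$w_{n'-1}$ (block~$2^{n'-1}-1$) and $\markr$ at the first block of the $(n'-1)^\omega$-tail (block~$2^{n'}-1$): $w_{n'-1}$ is bad-pair-free, so no valid pair lies entirely inside it; every value~$j < n'-1$ disappears once the tail begins; and any pair extending past the first tail block contains an intermediate block with $x_{i''} = n'-1$ that violates~$\psi_4$. Hence $\markr$ must land at symbol position $2^n(2^{n'}-1) > 2^{n'} = 2^{2^{2^n}} \ge k$, so $\Fp \markr$ cannot hold. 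The main obstacle is this case analysis: the $\sharpsym$-threat is essential to prevent Player~$O$ from inflating intermediate $y_{i''}$ and thereby matching trivial equal pairs deep inside $w_{n'-1}$, while making the $b_I$-part of Player~$I$'s strategy oblivious decouples the argument from the particular~$f$ and handles arbitrarily large lookahead uniformly.
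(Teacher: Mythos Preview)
Your proof is correct and follows essentially the same approach as the paper, with two minor but worthwhile variations.

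First, the paper places the prompt conjunct inside the implication, defining $\phi_n' = (\psi_0 \wedge \psi_1) \rightarrow (\psi_2 \wedge \psi_3 \wedge \psi_4 \wedge \psi_5 \wedge \Fp \markr)$, whereas you conjoin $\Fp\markr$ at the top level. Both choices work; yours requires a one-line remark that Player~$O$ can always drop $\markr$ early whenever Player~$I$ violates $\psi_0\wedge\psi_1$, so the upper bound still goes through.

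Second, and more interestingly, the paper's lower-bound argument literally reuses the adaptive strategy~$\stratI$ from Theorem~\ref{thm_ltllookahead_lb}, whose continuation $x\neq y_0$ after $w_{n'}$ presupposes that Player~$I$ has already seen $y_0$---something that is not guaranteed for arbitrary~$f$. The paper's argument still goes through because only the bad-pair-free prefix actually matters for bounding the position of $\markr$, but this is left implicit. Your oblivious choice $w_{n'-1}\cdot(n'-1)^\omega$ (keeping only $\sharpsym$ adaptive) makes the strategy well-defined for every delay function and forces the case analysis to be spelled out: no bad pair inside $w_{n'-1}$, no pair for $j<n'-1$ reaching into the tail, and every $(n'-1)$-pair has its right endpoint at block index at least $2^{n'}-1$. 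This is a cleaner way to decouple the argument from~$f$ at essentially no extra cost.
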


\begin{proof}
Let $\phi_n' = (\psi_0 \wedge \psi_1 ) \rightarrow (\psi_2 \wedge \psi_3 \wedge \psi_4 \wedge \psi_5 \wedge \Fp \markr)$, where the alphabets and the formulas~$\psi_i$ are as in the proof of Theorem~\ref{thm_ltllookahead_lb}. 

Let $k = f(0) = 2^n \cdot \pow{n'}$ with $n' = 2^{2^n}$ as above. Then, the strategy~$\stratO$ for Player~$O$ described in the proof of Theorem~\ref{thm_ltllookahead_lb} is winning for $\delaygame{\phi_n}$ with bound~$k$: it places both markers within the first~$f(0) = k$ positions, as it specifies a bad $j $-pair within this range.

Now, assume we have $k < 2^n \cdot 2^{n'}$, consider the strategy~$\stratI$ for Player~$I$ as defined in the proof of Theorem~\ref{thm_ltllookahead_lb}, and recall that every outcome that is consistent with $\stratI$ starts with the sequence~$w_{n'}$ in the first component.
Satisfying $\psi_2 \wedge \psi_3 \wedge \psi_4 \wedge \psi_5 $ against $\stratI$ requires Player~$O$ to mark a bad $j$-pair and to produce the sequence $j^\omega$. However, Player~$I$ does not produce a bad $j$-pair in the first $k$ positions, i.e., the conjunct~$\Fp \markr$ is not satisfied with respect to $k$. Hence, $\stratI$ is winning for Player~$I$ in $\delaygame{\phi_n}$ with bound~$k$. \qed
\end{proof}

\subsection{Lower Bounds on Complexity} 
\label{subsec_lowerbounds_cc}
Our final result settles the complexity of solving $\prompt$ delay games. The triply-exponential algorithm presented in the previous section is complemented by proving the problem to be $\threeexp$-complete, which even holds for $\ltl$. The proof is a combination of techniques developed for the lower bound on the lookahead presented above and of techniques from the $\exptime$-hardness proof for solving delay games whose winning conditions are given by deterministic safety automata~\cite{KleinZimmermann16a}.

\begin{theorem}
\label{thm_ltlcomplexity_hardness}
The following problem is $\threeexp$-complete: given an $\ltl$ formula~$\phi$, does Player~$O$ win $\delaygame{\phi}$ for some delay function~$f$?
\end{theorem}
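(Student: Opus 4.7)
The $\threeexp$ upper bound is immediate, since $\ltl$ is a syntactic fragment of $\prompt$ and Theorem~\ref{thm_promptdelaycomplexity} applies. For the matching lower bound, the plan is to reduce from acceptance of alternating Turing machines with doubly-exponential space, a class known to equal $\threeexp$. Given such a machine $M$ and an input $x$ with $n$ polynomial in $|M|+|x|$ such that $M$ on $x$ uses space $2^{2^n}$, I will construct in polynomial time an $\ltl$ formula $\phi_{M,x}$ of size polynomial in $n$ with the property that $M$ accepts $x$ if and only if Player~$O$ wins $\delaygame{\phi_{M,x}}$ for some delay function~$f$.

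The encoding combines two ingredients: the doubly-exponential cyclic addressing of Section~\ref{sec_lowerbounds} and the standard ATM configuration encoding from the $\exptime$-hardness proof for safety delay games~\cite{KleinZimmermann16a}. The $n$ address propositions~$b_0, \ldots, b_{n-1}$ carve the timeline into inner blocks of length~$2^n$, which serve as slots for cells of ATM configurations. Within each block, output propositions record the cell contents (tape symbol, plus head position and current state when the head is present) and encode the cell's outer index in its configuration as a number in $R = [0, 2^{2^n}-1]$. A configuration is a contiguous run of $2^{2^n}$ inner blocks whose outer indices enumerate $R$ in order, delimited by a dedicated configuration-start proposition; further propositions at configuration boundaries encode the universal branching bits supplied by Player~$I$ and the existential ones supplied by Player~$O$.

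The formula has the form $\phi_{M,x} = \psi_{\text{adm}} \rightarrow \psi_{\text{win}}$. The antecedent~$\psi_{\text{adm}}$ enforces that Player~$I$'s moves are admissible, most importantly cyclic addressing on the~$b_i$, at most one challenge marker~$\sharpsym$ in his entire input, and a universal branching bit supplied exactly when the current state of $M$ is universal. The consequent~$\psi_{\text{win}}$ requires that the first configuration equals the initial one of $M$ on $x$, that Player~$O$ supplies existential branches whenever the state is existential, that every pair of consecutive configurations respects $M$'s transition function modulo challenge, and that an accepting configuration is eventually reached. Initial configuration, acceptance, and branching-bit requirements are directly expressible by small formulas because only locally checkable properties, or properties involving only cells of polynomially bounded outer index, are involved.

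The main obstacle is expressing transition consistency, since configurations have doubly-exponential length and $\ltl$ cannot compare distant cells position by position. Following the pattern of Theorem~\ref{thm_ltllookahead_lb}, Player~$I$ is allowed to challenge a single purported transition by placing~$\sharpsym$ within one block of the $(k+1)$-st configuration and by stating its outer index in a dedicated preceding inner block; a polynomial-size $\ltl$ formula then inspects the unique block of the $k$-th configuration with the same stored outer index, together with its immediate neighbours to accommodate head movement, and refutes the outcome exactly when the local ATM transition rule is violated under the recorded branching bit. Player~$O$'s lookahead lets her read Player~$I$'s universal choices before committing, while Player~$I$ can base his placement of $\sharpsym$ on Player~$O$'s past outputs so as to precisely target any cheated transition; hence a flawed trajectory of Player~$O$ can always be refuted and a faithful one never. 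Correctness then follows by a standard argument: $M$ accepts $x$ iff Player~$O$ has a strategy to write an accepting computation subtree of $M$ on $x$, and combined with the upper bound this yields $\threeexp$-completeness.
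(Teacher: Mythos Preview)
Your upper-bound argument is correct, but the hardness reduction has the roles of the two players inverted relative to what the lookahead asymmetry of delay games permits, and this breaks the ``only if'' direction.

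In your encoding Player~$O$ writes the configurations and Player~$I$ challenges a single transition with~$\sharpsym$. You assert that Player~$I$ can base the placement of~$\sharpsym$ on Player~$O$'s past outputs, but you overlook that Player~$O$ chooses the delay function. If she picks a constant lookahead~$d$ exceeding the length of one encoded configuration (nothing in~$\phi_{M,x}$ forbids this), then when Player~$I$ emits the input letter at position~$j$ he has only seen $\beta(0)\cdots\beta(j-d)$; since you place~$\sharpsym$ inside configuration~$k{+}1$ and announce the challenged outer index even earlier, at that moment Player~$I$ has not yet seen any of Player~$O$'s output for configuration~$k{+}1$, nor even for configuration~$k$. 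Conversely, Player~$O$ when writing position~$e$ already sees $\alpha(0)\cdots\alpha(e{+}d{-}1)$, hence the~$\sharpsym$ and the announced index, long before she commits to the challenged cell. Thus even if $M$ rejects~$x$, Player~$O$ can read off the single challenged index in advance, write that one cell correctly, fabricate the remainder of the run so that it reaches an accepting state, and still satisfy~$\psi_{\text{win}}$. The implication ``$M$ rejects~$x$ $\Rightarrow$ Player~$O$ loses for every~$f$'' therefore fails.

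The paper's proof exploits the asymmetry in the opposite direction: Player~$I$ produces the configurations and picks the existential transitions, while Player~$O$ picks the universal transitions and uses her lookahead to detect and mark an incorrect update via~$\markl$ and~$\markr$. Player~$I$'s~$\sharpsym$ serves only to dispute Player~$O$'s error claim (by naming an address at which the two marked cells carry different outer indices), and because the inner addressing is cyclic of period~$2^n$ this~$\sharpsym$ may be placed at an arbitrarily late position, after Player~$I$ has seen both markers. Accordingly the reduction yields ``Player~$O$ wins for some~$f$ iff $M$ \emph{rejects}~$x$'', with complementation handled via closure of~$\atwoexpspace$ under complement.
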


\begin{proof}
\sloppy
Membership is proven in Theorem~\ref{thm_promptdelaycomplexity}, since $\ltl$ is a fragment of \linebreak $\prompt$. Hence, it remains to prove hardness. To this end, let $\tm = (Q, Q_\exists, Q_\forall, \Sigma, q_I, \Delta, \qacc, \qrej)$ be an alternating doubly-exponential space Turning machine with transition relation~$\Delta \subseteq Q \times \Sigma \times Q \times \Sigma \times \set{-1,+1}$ and accepting and rejecting states~$\qacc$ and $\qrej$, which we assume w.l.o.g.\ to have self-loops. Furthermore, let $p$ be a polynomial such that $2^{2^{p}}$ bounds the space-consumption of $\tm$ and let $x \in \Sigma^*$ be an input. Fix $n = p(\size{x})$.
\fussy
We construct an $\ltl$ formula~$\phi$ (of polynomial size in $n + \size{\Delta}$) such that Player~$O$ wins $\delaygame{\phi}$ for some $f$ if, and only if,  $\tm$ rejects $x$. This suffices, as $\atwoexpspace = \threeexp$ is closed under~complement.

Fix $I = \set{b_0, \ldots, b_{n-1},b_I,\sharpsym,\copyconf, \newconf } \cup \Sigma \cup Q \cup \Delta$ and $O = \set{ \errorsym, \markl, \markr } \cup \Delta$. Let $\psi_0 $ be the formula that requires Player~$I$ to implement the cyclic addressing of length~$2^n$ starting at zero using the bits~$b_j$. In the following, we only consider outcomes that satisfy $\psi_0$. Also, blocks are defined as before and we again interpret the bits~$b_I$ of a block as a number in $R = [0, 2^{2^n}-1]$. There is an $\ltl$ formula~$\theta_0$ of polynomial size that allows Player~$O$ to use the  error mark~$\errorsym$ to force Player~$I$ to implement a cyclic addressing of the blocks of length~$2^{2^n}$ starting at zero~\cite{Zimmermann13}. In particular, the formula~$\theta_0$ holds if, and only if, the first occurrence of $\errorsym$ marks a position witnessing that the addressing is not implemented correctly. Hence, Player~$O$ can satisfy $ \theta_{0} $ if, and only if, Player~$I$ incorrectly implements the cyclic addressing of the blocks of length~$2^{2^n}$. 

Assume he implements both addressings correctly: then, a superblock is an infix starting with a block encoding $0 \in R$ and that ends one position before the next block that encodes $0 \in R$, i.e., each superblock consists of $2^{2^n}$ blocks. We use such superblocks to encode configurations of $\tm$ on $x$ by placing the cell contents at the starts of the blocks.

Intuitively, Player~$I$ produces configurations of $\tm$ and is in charge of existential states, while Player~$O$ controls the universal ones and checks the configurations for correctness using the marks~$\markl$ and $\markr$ to indicate cells where the configurations were not updated correctly. To account for the lookahead in the game, which means that Player~$O$ picks her transitions to apply asynchronously, Player~$I$ is able to copy configurations in order to wait for Player~$O$'s choice. Again, Player~$O$ checks this copying process for correctness. 

To this end, we use the two propositions~$\copyconf$ and $\newconf$ to denote whether a copy or a successor configuration follows. If Player~$I$ produces a new universal configuration, then Player~$O$ has to pick some transition from $\Delta \subseteq O$, which should be applied to this configuration, possibly after some copies. If Player~$I$ copies a configuration ad infinitum, then he loses.

Consider the following assumptions on Player~$I$'s behavior (along with $\psi_0$):
\begin{enumerate}

	\item At every start of a block, exactly one proposition from $\Sigma$ holds and at most one from $Q$. Also, in each superblock, there is exactly one start of a block where a proposition from $Q$ holds. If this holds, then each superblock encodes a configuration of $\tm$ of length~$2^{2^{p(\size{x})}}$.

	\item The configuration encoded by the first superblock is the initial one of $\tm$ on $x$.

	\item At each start of a superblock, either $\copyconf$ or $\newconf$ holds, starting with $\newconf$ at the first superblock. Furthermore, we require $\newconf$ to hold infinitely often at such positions.

	\item At each start of a superblock that encodes an existential configuration, exactly one proposition from $\Delta$ holds, which has to be applicable to the configuration.

	\item There is at most one position where $\sharpsym$ holds. This is used by Player~$I$ to check Player~$O$'s error claim and is implemented as in the proof of Theorem~\ref{thm_ltllookahead_lb}.

\end{enumerate}
Each of these properties~$i \in \set{1,2,3,4,5}$ can be captured by an $\ltl$ formula~$\psi_i$ of polynomial size. Furthermore, let $\theta_1$ be an $\ltl$ formula that expresses the following, which has to be guaranteed by Player~$O$: at each start of a superblock that encodes a universal configuration, exactly one proposition from $\Delta$ holds, which has to be applicable to the configuration.

Now, we define what it means for Player~$O$ to mark an incorrectly updated configuration: the conjunction of the following properties has to be satisfied, where  we assume that $\bigwedge_{i=0}^5\psi_i \wedge \neg\theta_0 \wedge \theta_1$ holds, as this is the only case where the formula to be defined is relevant.

\begin{enumerate}
	
	\item $\markl$ and $\markr$ hold both exactly once, each at the start of a block. Furthermore, $\markr$ appears one superblock after the superblock in which $\markl$ appears.	
	
	\item If there is a position~$i_\ssharpsym$ marked by $\sharpsym$, then there are no two different positions $i$ being in the superblock of $\markl$ and $ i'$ being in the superblock of $\markr$ such that the following two conditions are satisfied: the addresses of $i$, $i'$, and $i_\ssharpsym$  are equal and $b_I$ holds at $i$ if, and only if, $b_I$ does not hold at $i'$. Such positions witness that Player~$O$ has not marked the same cell of the two subsequent configurations. This manifests itself in a single bit~$b_I$, whose address is  marked by Player~$I$.
		
	\item If $\copyconf$ holds at the start of $\markr$'s superblock, then there has to be a proposition from $\Sigma \cup Q$ that holds at the position marked by $\markl$ if, and only if, it does not hold at the position marked by $\markr$.

	\item If $\newconf$ holds at the start of $\markr$'s superblock, then let $ \delta \in \Delta$ be the unique transition holding at the last occurrence of $\newconf$ before the start of this superblock. Furthermore, let $c_m$ be the tape content (a letter from $\Sigma$ and possibly a state from $Q$) encoded at the position marked with $\markl$ and let $c_\ell$ ($c_r$) be the cell content encoded in the start of the previous (next) block. Then, $(c_\ell, c_m, c_r)$ uniquely determine the cell content of the middle cell after applying the transition~$\delta$, call it $c$. We require that the cell content encoded at the position marked by $\markr$ is different from $c$.
	
\end{enumerate}
Let $\theta_2$ be an $\ltl$ formula capturing the conjunction of these properties, which can be constructed such that it has polynomial size. 

Now, consider the $\ltl$ delay game $\delaygame{\phi}$ with $\phi = \bigwedge_{i=0}^5\psi_i \rightarrow (\theta_1 \wedge (\theta_0 \vee \theta_2 \vee \F\qrej))$. We show that $\tm$ rejects $x$ if, and only if, Player~$O$ wins $\delaygame{\phi}$ for some~$f$.

First, assume $\tm$ rejects $x$ and let $f$ be the constant delay function with $f(0) =  2\cdot 2^{2^n}$. We show that Player~$O$ wins $\delaygame{\phi}$. As long as Player~$I$ implements both addressings correctly and produces legal configurations as required by the antecedent of $\phi$,  Player~$O$ has enough lookahead to correctly claim the first error introduced by Player~$I$,  no matter whether he increments the superblock addressing incorrectly or updates a configuration incorrectly. Her strategy is to place the markers~$\errorsym, \markl, \markr$ at appropriate positions. Furthermore, she has access to the whole encoding of each universal configuration whose successor she has to determine. This allows her to simulate the rejecting run of $\tm$ on $x$, which reaches the rejecting state~$\qrej$, no matter which existential transitions Player~$I$ picks. Thus, he has to introduce an error in order to win, which Player~$O$ can detect using the lookahead. If Player~$I$ does not introduce an error, the play reaches a rejecting configuration. In every case Player~$O$ wins. 

For the converse direction, we show the contrapositive. Assume that $\tm$ accepts $x$ and let $f$ be an arbitrary delay function. We show that Player~$I$ wins $\delaygame{\phi}$. Player~$I$ implements both addressings correctly, starts with the initial configuration, and picks the successor configuration of an existential one according to the accepting run. Also, he copies universal configurations to obtain a play prefix in which Player~$O$ has to determine the transition she wants to apply in this configuration. Thus, he will eventually produce an accepting configuration without ever introducing an error. In particular, a rejecting state is never reached and Player~$O$ cannot successfully claim an error: the superblock addressing is correctly implemented and if she claims an erroneous update of the configurations, she has to mark different cells, as there are no such incorrect updates. This can be detected by Player~$I$ by placing the $\sharpsym$ at a witnessing address.  In either case, Player~$I$ wins the resulting play. \qed
\end{proof}

\section{Delay Games on Non-deterministic, Universal, and Alternating Automata}
\label{sec_nonaltuniv}
Finally, we argue that the lower bounds just proven for $\ltl$ delay games can be modified to solve open problems about $\omega$-regular delay games whose winning conditions are given by non-deterministic, universal, and alternating automata (note that non-determinism and universality are not dual here, as delay games are asymmetric).

\sloppy

Recall that solving delay games with winning conditions given by deterministic parity automata is $\exptime$-complete and that exponential constant lookahead is sufficient and in general necessary. These upper bounds yield doubly-exponential upper bounds on both complexity and lookahead for non-deterministic and universal parity automata via determinization, which incurs an exponential blowup. Similarly, we obtain triply-exponential upper bounds on both complexity and lookahead for alternating parity automata, as determinization incurs a doubly-exponential blowup in this case. 

\fussy

For alternating automata, these upper bounds are tight, as $\ltl$ can be translated into linearly-sized alternating automata (even with very weak acceptance conditions). Hence, the triply-exponential lower bounds proven in the previous section hold here as well.

To prove doubly-exponential lower bounds for the case of non-deterministic and universal automata, one has to modify the constructions presented in the previous section. Let us first consider the case of non-deterministic automata: to obtain a matching doubly-exponential lower bound on the necessary lookahead, we require Player~$I$  to produce an input sequence in $\set{0,1}^\omega$, where we interpret every block of $n$ bits as the binary encoding of a number in $\set{0,1, \ldots, 2^n-1}$. In order to win, Player~$O$ also has to pick an encoding of a number~$j$ with her first $n$ moves such that the sequence of numbers picked by Player~$I$ contains a bad $j$-pair. To allow the automaton to check the correctness of this pick, we require Player~$O$ to repeat the encoding of the number ad infinitum. Then, the automaton can guess and verify the two positions comprising the bad $j$-pair. Finally, to prevent Player~$O$ from incorrectly copying the encoding of $j$ (which manifests itself in a single bit), we use the same marking construction as in the previous section: Player~$I$ can mark one position~$i$ by a $\sharpsym$ to claim an error in some bit at position $i \bmod n$. The automaton can guess the value $i \bmod n$ and verify that there is no such error (and that the guess was correct). Using similar ideas one can encode an alternating exponential space Turing machine proving the $\twoexp$ lower bound on the complexity for non-deterministic automata.

For universal automata, the constructions are even simpler, since we do not need the marking of Player~$I$. Instead, we use the universality to check that Player~$O$ copies her pick~$j$ correctly. Altogether, we obtain the  results presented in Figure~\ref{fig_results}, where careful analysis shows that the lower bounds already hold for weaker acceptance conditions than parity, e.g., safety and weak parity (the case of reachability acceptance is exceptional, as such games are $\pspace$-complete for non-deterministic automata~\cite{KleinZimmermann16a}). 

\begin{figure}[h]
\centering	
	\begin{tabular}{lllll}
Automaton type &\hspace*{1em}& complexity &\hspace*{1em}& lookahead \\
\toprule

deterministic parity & & $\exptime$-complete & & exponential \\

non-deterministic parity & & $\twoexp$-complete & & doubly-exponential \\

universal parity & & $\twoexp$-complete & & doubly-exponential \\

alternating parity & & $\threeexp$-complete & & triply-exponential \\

	\end{tabular}
\caption{Overview of results for the $\omega$-regular case.}
\label{fig_results}	
\end{figure}

\vspace{-2.8em}
\section{Conclusion}
\label{sec_conc}
We identified $\prompt$ as the first quantitative winning condition for delay games that retains the desirable qualities of $\omega$-regular delay games: in particular, bounded lookahead is sufficient to win $\prompt$ delay games and to determine the winner of such games is $\threeexp$-complete. This complexity should be contrasted to that of delay-free $\ltl$ and $\prompt$ games, which are already $\twoexp$-complete. We complemented the complexity result by giving tight triply-exponential bounds on the necessary lookahead and on the necessary bound~$k$ for the prompt eventually operator.

All our lower bounds already hold for $\ltl$ and therefore also for (very-weak) alternating Büchi automata, since $\ltl$ can be translated into such automata of linear size~\cite{GastinO01}. On the other hand, we obtained tight matching upper bounds: solving delay games on alternating automata is $\threeexp$-complete and triply-exponential lookahead is in general necessary and always sufficient. Furthermore, our lower bounds can be modified to complete the picture in the $\omega$-regular case with regard to the branching mode of the specification automaton: solving delay games with winning conditions given by non-deterministic or universal automata is $\twoexp$-complete and doubly-exponential lookahead is sufficient and in general necessary. 

Finally, as usual for results based on the alternating-color technique, our results on $ \prompt $ hold for the stronger logics $\pltl$~\cite{DBLP:journals/tocl/AlurETP01}, $\pldl$~\cite{FaymonvilleZimmermann14}, and their variants with costs~\cite{Zimmermann15c} as well.


\bibliographystyle{splncs03}
\bibliography{biblio}

\end{document}